\documentclass[10pt, conference, letterpaper]{ IEEEtran}
\IEEEoverridecommandlockouts

\usepackage{amsmath,amssymb,amsfonts}
\usepackage{amsthm}
\usepackage{mathrsfs}
\usepackage{stmaryrd}
\usepackage{algorithm}
\usepackage{multicol}
\usepackage[export]{adjustbox}
\usepackage{textcomp}
\usepackage[dvipsnames]{xcolor}
\usepackage{booktabs}
\usepackage{subfigure}
\usepackage{comment}
\usepackage[commentColor=Orange!70!Gray,indLines=false,rightComments=false]{algpseudocodex}
\usepackage{amsthm}
\usepackage{array}
\usepackage{bbding}

\usepackage{enumitem}
\usepackage[T1]{fontenc}
\usepackage{calc}
\newtheorem{theorem}{Theorem}
\newtheorem{lemma}{Lemma}

\newtheorem{proposition}{Proposition}

\def\BibTeX{{\rm B\kern-.05em{\sc i\kern-.025em b}\kern-.08em
    T\kern-.1667em\lower.7ex\hbox{E}\kern-.125emX}}
\begin{document}

\title{Reputation-Based Leader Election under Partial Synchrony: Towards a Protocol-Independent Abstraction with Enhanced Guarantees\\
}

\author{Xuyang Liu$^{\dagger,\ddagger}$, Zijian Zhang$^{\dagger,*}$, Zhen Li$^\dagger$, Jiahang Sun$^\S$, Jiamou Liu$^\ddagger$, Peng Jiang$^{\dagger,*}$\\
	$^\dagger$School of Cyberspace Science and Technology, Beijing Institute of Technology, China\\
	$^\ddagger$School of Computer Science, The University of Auckland, New Zealand\\
	$^\S$School of Computer Science, Beijing Institute of Technology, China \\
	Email: \{liuxuyang, zhangzijian, zhen.li, sunjh, pengjiang\}@bit.edu.cn, jiamou.liu@auckland.ac.nz\\ 
	$^*$Corresponding author
}

\maketitle

\begin{abstract}
Leader election serves a well-defined role in leader-based Byzantine Fault Tolerant (BFT) protocols. Existing reputation-based leader election frameworks for partially synchronous BFTs suffer from either protocol-specific proofs, narrow applicability, or unbounded recovery after network stabilization, leaving an open problem. This paper presents a novel protocol-independent abstraction formalizing generic correctness properties and effectiveness guarantees for leader election under partial synchrony, enabling protocol-independent analysis and design. Building on this, we design the Sliding Window Leader Election (SWLE) mechanism. SWLE dynamically adjusts leader nominations via consensus-behavior-based reputation scores, enforcing Byzantine-cost amplification. We demonstrate SWLE introduces minimal extra overhead to the base protocol and prove it satisfies all abstraction properties and provides superior effectiveness. We show, with a 16-server deployment across 4 different regions in northern China, SWLE achieves up to 4.2× higher throughput, 75\% lower latency and 27\% Byzantine leader frequency compared to the state-of-the-art solution under common Byzantine faults, while maintaining efficiency in fault-free scenarios.

\end{abstract}

\begin{IEEEkeywords}
	Distributed consensus, Byzantine fault tolerance, leader election, partial synchrony.
\end{IEEEkeywords}

\section{Introduction}
\noindent \textbf{Overview.} 
This paper addresses limitations in existing reputation-based leader election frameworks for partially synchronous BFTs that either tie correctness to specific protocols, target narrow scenarios or protocol classes, or lack bounded recovery post-GST.  We present a novel protocol-independent abstraction formalizing generic properties for theoretical analysis. Based on this, we design a novel mechanism, providing proven correctness, effectiveness (including bounded post-GST recovery), significantly outperforming state-of-the-art approach under common Byzantine faults.

\noindent \textbf{History.} Byzantine Fault Tolerant (BFT) state machine replication originated from Lamport et al.'s Byzantine Generals Problem \cite{lamport2019byzantine}, which formalized achieving deterministic consensus in distributed systems with malicious actors. They proved the impossibility of solving consensus with $f$ traitors among $3f$ generals, while providing a solution for $3f+1$ replicas under synchrony with high communication complexity. Fischer et al. \cite{fischer1985impossibility} later showed deterministic consensus is impossible in asynchronous settings even with a single fault, necessitating randomness for liveness. Ben-Or \cite{ben1983another} established the fundamental $(n-1)/3$ resilience bound under asynchrony, requiring at least $3f+1$ replicas to tolerate $f$ Byzantine faults.

\noindent \textbf{Partial Synchrony.} In 1988, Dwork et al. proposed the partially synchronous model \cite{dwork1988consensus}, introducing the Global Stabilization Time (GST) - an unknown finite time after which network becomes synchronous. This model enables BFT protocols to ensure safety during all periods and liveness after GST, provided new directions for designing BFT protocols.

\noindent \textbf{Leader-Based Paradigm.} Leader-based paradigms emerged as dominant approach in consensus development: one replica serves as leader/primary to coordinate consensus while others function as voters. Initially introduced for conflict resolution to ensure liveness by providing a single authority to order conflicting operations \cite{lamport2006fast}, the role evolved to leaders being fully responsible for initiating proposals. The typical consensus flow involves leader election, proposal dissemination, voting, and decisions based on supermajority (mainly BFT) or majority (mainly Crash Fault Tolerence (CFT)) agreement. If consensus is achieved, system commits and progresses; otherwise, it aborts and elects a new leader to restart the process.

\noindent \textbf{Unique Benefits of Partially Synchronous \& Leader-Based BFTs.} Partial synchrony offers unique advantages over full synchrony or asynchrony, occupying an intermediate position—closely modelling real-world networks with bounded delays and inherent uncertainties. It organizes consensus into views \cite{castro1999practical}, providing an elegant protocol design framework that balances safety and liveness guarantees. The leader-based paradigm further amplifies these benefits through coordinated leadership, delivering reduced communication complexity, simplified protocol design, and increased efficiency. Consequently, these protocols become enormously impactful and form core infrastructure of permissioned blockchains \cite{androulaki2018hyperledger}.

\noindent \textbf{Leader Election under Partial Synchrony: Challenges, Existing Solutions, and the Open Problem.}  
Classical BFTs \cite{castro1999practical,gueta2019sbft} employ stable leaders until failure which is then replaced by designated successors. While blockchain-inspired chain-based BFTs \cite{yin2019hotstuff,abraham2020sync,liu2024dolphin} introduce leader rotation to address chain quality in trustless environments where replicas should fairly share responsibilities. Under leader rotation, replicas alternate as leaders following predefined rules, ensuring correct replicas lead infinitely with fixed proportions as view increases, but unfortunately extending same guarantee to Byzantine replicas.
Since leaders control proposals, Byzantine leaders can severely disrupt consensus. Most fundamentally, they can simply drop messages until timeout-triggered rotation occurs (cost-free), causing considerable performance degradation as timeouts typically far exceed normal commit latencies.

Several approaches have emerged to mitigate Byzantine leader impact, targeting transaction ordering \cite{kelkar2020order,kelkar2023themis}, communication protocols \cite{alhaddad2022balanced,yang2022dispersedledger}, or imposing computational costs through hash-based elections \cite{zhang2021prosecutor}. Recent research has explored reputation-based election approaches \cite{zhang2024prestigebft, liu2025abse,cohen2022aware,tsimos2024hammerhead} where reputation-based signifies relying solely on available information during consensus process to determine leaders, offering self-contained solutions that avoid external dependencies.
However, state-of-the-art solutions primarily focus on crash-only scenarios with limited practical significance, and none simultaneously achieve bounded recovery after GST, protocol-independence and general applicability to partially synchronous BFTs, leaving this as an open problem.

\subsection{Our Contributions.}

To address these limitations, this paper presents a comprehensive solution spanning theoretical foundations, mechanism design, and practical implementation. 

\vspace{1mm}
\noindent \textbf{Theoretical foundations.} We first establish a generic abstraction framework, defining three core properties that captures the essential correctness and effectiveness guarantees of leader election mechanisms in partially synchronous settings. This decouples correctness proofs from specific protocols, enabling protocol-independent analysis and design. 

\vspace{1mm}
\noindent \textbf{Mechanism design.} Building upon this foundation, we design the Sliding Window Leader Election (SWLE) mechanism that dynamically adjusts leader nominations based on replica behavior. The core design philosophy of SWLE centers on:
\begin{itemize}[leftmargin=13pt]
	\item Determining Leadership eligibility by accumulated reputation reflecting consensus participation and behavior quality.
	\item Separating election from current view by finalizing future leaders (sufficiently in advance to maintain liveness post-GST) through authenticated consensus decisions, eliminating race conditions between election and view progression.
	\item Maintaining a fixed-size window for future leaders and triggering fair, systematic elections universally.
	\item Enforcing Byzantine-cost amplification and introducing minimal extra overhead to the base protocol.
\end{itemize}

\noindent It integrates with any BFT satisfying two prerequisites (align with core properties of most partially synchronous BFTs). We prove SWLE satisfies our abstraction and provides superior effectiveness compared to conventional approaches while maintaining bounded recovery after network stabilization.

\vspace{1mm}
\noindent \textbf{Practical Implementation.} SWLE is implemented as a lightweight Rust module ($\sim$400 LoC) and its integration requires minimal modifications (e.g., +$\sim$300 LoC in non-chained HotStuff \cite{yin2019hotstuff}). We show, with a 16-server deployment across 4  regions in northern China, SWLE achieves up to 4.2× higher throughput, 75\% lower latency and 27\% Byzantine leader frequency under common Byzantine faults compared to state-of-the-art solution ABSE \cite{liu2025abse} and Hotstuff's original  election mechanism, ensures bounded recovery post-GST, while maintaining minor extra overhead in fault-free scenarios.

\textbf{Thus, this paper makes the following contributions:}

\begin{itemize}[leftmargin=13pt]
	\item A protocol-independent abstraction formalizing correctness and effectiveness for reputation-based leader election.
	\item SWLE mechanism build upon the abstaction, with formal proofs of satisfying all properties under partial synchrony.
	\item An open-source, lightweight implementation demonstrating seamless integration and superior performance.
\end{itemize}

\section{Related Work}

\noindent \textbf{Fault Tolerant Consensus.} Fault-tolerant consensus in distributed systems can be primarily categorized into CFT and BFT, with CFT addressing benign failures. Most CFTs stem from Paxos \cite{lamport2001paxos}, which influenced many subsequent designs, including BFTs. BFT addresses arbitrary failures and can be categorized into synchronous \cite{feldman1997optimal,abraham2020sync}, partially synchronous \cite{castro1999practical,yin2019hotstuff,liu2025partially}, and asynchronous \cite{gao2022dumbo,duan2023fin} variants based on network assumptions. In this work, we focus on partially synchronous BFTs, widely adopted as core building block for permissioned blockchains \cite{androulaki2018hyperledger}.

\vspace{1mm}
\noindent \textbf{Leader in BFTs.} Most BFTs employ a leader-based model. While leader roles vary across protocols \cite{castro1999practical,liu2025group,spiegelman2022bullshark}, they fundamentally coordinate ordering and facilitate agreement among replicas. Several studies \cite{borran2010leader,aublin2013rbft} identified leaders' significant influence on consensus performance, with some exploring leaderless alternatives \cite{miller2016honey,crain2021red}.
In this work, we focus on enhancing leader-based protocols through reputation mechanisms while preserving coordination advantages. 

\vspace{1mm}
\noindent \textbf{Chain-based BFTs.} The application of BFTs to blockchains spurred innovations like chain-based BFTs \cite{yin2019hotstuff,abraham2020sync,liu2024dolphin}, which introduce leader rotation to address chain quality, and employ a linear block structure to streamline message patterns and enable decision pipelining. A significant advancement is HotStuff \cite{yin2019hotstuff}, which first develops a basic non-chained version by combining features from SBFT \cite{gueta2019sbft} (collective signatures) and Tendermint \cite{buchman2016tendermint} (block locking), then extends to a chained variant by incorporating pipelining techniques. While chain-based BFTs share core safety with classical designs, liveness exhibit subtle distinctions under partial synchrony. Due to space constraints, our work establishes SWLE’s properties under classic pattern to maintain clarity of our fundamental approach, but note it can be extended seamlessly to chain-based variants, similar to extending BFT's properties from classic pattern to chain-based pattern.

\vspace{1mm}
\noindent \textbf{Existing Approaches Mitigating Byzantine Leader Impact.} A wealth of research targets leader vulnerabilities in BFTs. Approaches like Aequitas \cite{kelkar2020order} and Themis \cite{kelkar2023themis} mitigate adversarial ordering by guaranteeing transaction order-fairness. Prosecutor \cite{zhang2021prosecutor} hinders Byzantine servers from attaining leadership by imposing computational burdens via hash-based election campaigns. Communication-level solutions \cite{alhaddad2022balanced,yang2022dispersedledger} leverage balanced reliable broadcast or verifiable information dispersal to eliminate leaders as communication hubs during broadcasts—though often at excessive overhead. 

Recent reputation-based approaches offer self-contained alternatives. PrestigeBFT \cite{zhang2024prestigebft} uses reputation-based computational penalties for active view changes. ABSE \cite{liu2025abse} introduces an adaptive baseline score-based election framework but lacks protocol-independence and bounded post-GST recovery. Carousel \cite{cohen2022aware} uses on-chain data for local leader determination without consensus dependency, but focuses primarily on crash-only scenarios in chain-based BFTs; While Tsimos et al. \cite{tsimos2024hammerhead} extend its concepts to DAG-based BFTs, it creates framework divergence rather than unification. We provide detailed comparisons between our abstraction and state-of-the-art frameworks in Section \ref{subsec:comparison}.

\section{System Model}

We consider a state machine replication (SMR) \cite{lamport2019time} system consisting of $n$ replicas denoted as $\mathcal{R} = \{1, 2, \ldots, n\}$. 
We assume a Public Key Infrastructure (PKI) as the cryptographic primitive where each replica $i$ has a fixed public-private key pair $(pk_i, sk_i)$ during protocol execution. Replicas use digital signatures to authenticate messages, which are secure against computationally bounded adversaries. The system operates under the partial synchrony network model \cite{dwork1988consensus}: there exists an unknown Global Stabilization Time (GST) such that after GST, all messages between correct replicas are delivered within a known bounded delay $\Delta$. Before GST, message delays are unbounded, though communication channels remain reliable (ensuring eventual message delivery) and point-to-point.

We consider a Byzantine adversary controlling up to $f < n/3$ replicas (trusted hardware-dependent \cite{niu2025achilles} protocols may achieve a resilience bound of $(n-1)/2$; for simplicity, we omit this here and define $n = 3f + 1$), which may behave arbitrarily: they can collude, produce arbitrary values, eavesdrop, selectively delay or omit messages, and send conflicting information, but can not compromise the cryptographic primitives, prevent point-to-point communications between correct replicas, or indefinitely delay the occurrence of GST. Correct replicas strictly follow the protocol specification.

\section{Leader Election Abstraction}
\label{sec:abstraction}

Leader election serves a well-defined role in leader-based BFT protocols. In this work, we provide a generic foundation for designing reputation-based leader election mechanisms under partial synchrony, where reputation-based signifies leader determination relies solely on information available during consensus process or consensus-generated on-chain information. Instead of protocol-specific designs, we emphasize abstracting election patterns prevalent in most partially synchronous protocols. We then demonstrate how it enables design of a novel mechanism to replace conventional approaches.

\subsection{Leader-based BFTs under Partial Synchrony}
\label{subsec:structure}

Generally speaking, any leader-based BFT includes at least a  round-based normal state protocol. Partially synchronous BFTs additionally rely on view changes (with a form of leader rotations) for liveness and therefore proceed in views. For self-containment, before abstracting properties, we provide a brief high-level description of the workflow of such protocols:

The system progresses through sequential views $v = 1, 2, \dots$, where during $v$ a replica in $\mathcal{R}$, chosen through an election process, serves as the leader to drive the consensus protocol for one consensus round (explicitly assigned a unique sequence number in some protocols \cite{castro1999practical}) or more (some protocols separate data transition from consensus \cite{spiegelman2022bullshark} or create multiple instances \cite{lyu2025ladon}; for such protocols, we consider only one instance by default). Each consensus round comprises one or more phases, which can be broadly categorized into:

\begin{enumerate}
	\item \textbf{Dissemination phases}: where the leader broadcasts information (proposals/updates) to all replicas. For brevity, we will refer to these collectively as proposals.
	\item \textbf{Voting phases}: where replicas vote on the proposals, with most protocols requiring a supermajority (i.e., $2f+1$ votes from distinct replicas) for acceptance.
\end{enumerate}

A view change is triggered when a leader appears to be faulty (i.e., timer expiration). Some protocols (without affecting timeout-based changes) also include auxiliary strategies (e.g., triggered periodically \cite{veronese2009spin}), or have view change mechanisms embedded within the normal state protocol, causing leaders to rotate with consensus rounds  \cite{yin2019hotstuff}. The two most common leader election strategies include rolling-based selection \cite{liu2024dolphin,yin2019hotstuff} (round-robin fashion, e.g., the leader of $v$ is replica $v \bmod n$) and predefined leaders for each view before consensus begins \cite{castro1999practical,spiegelman2022bullshark}. 

Asynchronous BFTs, however, do not have view change mechanisms. Furthermore, the FLP result \cite{fischer1985impossibility} demonstrates that the fault tolerance problem cannot be solved deterministically in asynchronous settings. Therefore, additional randomness is required during leader election to guarantee liveness, typically relying on additional building blocks (e.g., common coins \cite{sui2025signature}). Due to such significant differences, we do not jointly consider asynchronous assumptions, which could result in properties that are either too broad with model-specific redundancies, or too complex for convenient mechanism design. We aim for properties that are concise, intuitive, and can effectively guide the design of new mechanisms.

\subsection{Property Specification}
\label{subsec:prospe} 
In our formalization, we abstract generic correctness properties for leader election mechanisms under partial synchrony, some of which are derived from existing protocol designs. Our core objective is to enable a single universal correctness proof for any leader election mechanism satisfying these properties, eliminating the need for protocol-specific proofs. 
Let $\textsc{leader}_j(v)$ denote replica $j$'s determination of the elected leader for view $v$. We first define the following fundamental properties:

\begin{description}
	\item[Leader Uniqueness:] \label{prop:uniqueness}
	For any view $v$, there do not exist two distinct replicas $j$ and $k$ such that both can gather a quorum of votes (i.e., $2f+1$ distinct votes) for conflicting leadership claims in $v$.
	\item[Timely Finalization:] \label{prop:timeliness}
	For any view $v$, every correct replica $j$ finalizes $\textsc{leader}_j(v)$ before or upon entering $v$.
\end{description}

Leader Uniqueness is a safety property, while Timely Finalization is a liveness property (dependent on eventual synchrony after GST). Briefly, common leader election mechanisms under partial synchrony essentially fix the leader per view before consensus begins, hence easily satisfying both properties. This is, however, not the case for reputation-based leader election mechanisms, because replica reputation needs to be established during protocol execution -- leader selection strategies needs to change dynamically as consensus progresses. Nevertheless, it is also important to note a leader election mechanism should focus solely on the process of electing leaders and \textbf{should not} alter the leader's functional role in the protocol (e.g., Byzantine leader tolerance mechanisms should remain unchanged. The key enhancement should lie in reducing the probability of Byzantine leaders being elected through reputation mechanisms, thereby decreasing the frequency of scenarios where "tolerating Byzantine leaders" becomes necessary). 

While the above two properties ensure basic consistency and leader availability, they are still insufficient for designing a effective leader election mechanism. Consider the worst-case scenarios for a mechanism satisfying both properties:
 \begin{enumerate}
 	\item The mechanism consistently elect Byzantine leaders.
 	\item Correct replicas finalize different leaders for same views (i.e. no replicas can receive $2f+1$ votes during elections).
 \end{enumerate}
In both cases, consensus cannot progress normally (liveness issues). This motivates an extra property to quantify a mechanism's effectiveness - measuring how well it promotes progress by electing unified correct leaders among correct replicas. This can also be used to theoretically evaluate whether it can serve as a replacement for a protocol's original mechanism
\vspace{1mm}

\noindent \textbf{Attempt 1.} An ideal enhancement to a leader election mechanism that could implement eventual perfect failure detection may be defined as: faulty replicas are permanently identified and correct replicas are never suspected (or, equivalently, for any correct replica there always exist views where it is elected among correct replicas, and no view exists where a Byzantine replica is elected). Unfortunately, a reputation-based leader election mechanism cannot achieve this definition because:

\begin{itemize}
	\item It must allow suspected correct replicas to be unsuspected based on good behavior and hence cannot permanently isolate a Byzantine replica once the latter starts behaving well and accumulates sufficient reputation.
	\item Byzantine replicas can choose to behave correctly, making them indistinguishable from correct replicas.
	\item Under network delays or asynchronous periods, even correct replicas might occasionally fail to accumulate reputation, thus being suspected.
\end{itemize}

\noindent \textbf{Attempt 2.} Another choice is to define the following property:

\begin{description}
	\item[Increased Adversarial Cost:] \label{prop:increased}
	Byzantine replicas incur higher cost of misbehavior to be elected as leaders.
\end{description}

This property essentially creates a tradeoff, which in turn indirectly reflects the effectiveness of a leader election mechanism, and is more feasible than Attempt 1: Byzantine replicas can no longer simply wait for their turn in the rotation without any good behavior required to eventually become leaders. For example, if Byzantine replicas need to contribute positively to consensus for a substantial period before gaining leadership eligibility, the overall system performance improves compared to scenarios where they can disrupt without such constraints.

Unfortunately, first, this property remains incomplete because it still does not consider whether correct replicas can be elected. Second, terms like "higher" are essentially comparative properties that require reference to a baseline mechanism (e.g., protocol's original mechanism), which violates our protocol-independent abstraction goal. Furthermore, such terms are too vague and cannot well measure effectiveness.

We comment that while we do not adopt Increased Adversarial Cost, we retain its core intuition. A leader election mechanism can indeed be motivated by this idea. Moreover, if a mechanism (whether reputation-based or not) preserves the original leadership eligibility (under common mechanisms) of correct replicas, then the property will naturally apply to it.

\noindent \textbf{Final Attempt.} Finally, we define the following property:

\begin{description}
	\item[$\gamma$-Guarantee:] A leader election mechanism satisfies the $\gamma$-Guarantee (where $\gamma > 0$) if there exist positive integers $v_c$ and $T$, and a constant $\sup \leq T$, such that for any view $v > v_c$, over any sequence of $T$ consecutive views starting from $v$, the expected number of views $E[C]$ where:
	(1) All correct replicas agree on the same elected leader, and
	(2) This leader is correct,
	satisfies:$\sup \geq E[C] \geq \gamma T$
\end{description}

$\gamma$-Guarantee is also a liveness property, and is slightly unconventional. It is somewhat similar to the chain quality property in Bitcoin \cite{garay2024bitcoin}, having both deterministic guarantees (addressing the liveness issues) and measurement indicators (effectiveness). We explain why $E[C]$ is bounded by a range:

\vspace{1mm}

\noindent \textit{Lower Bound ($\gamma T$)}: Acknowledges that Byzantine replica behavior is variable and that perfect detection is impossible (Attempt 1). When Byzantine replicas behave correctly, they become indistinguishable from correct replicas under reputation-based leader election mechanisms and cannot be identified. It is hoped that this situation serves as the lower bound for a leader election mechanism: Byzantine replicas cannot further reduce the effectiveness guarantee (i.e. reduce $E[C]$ below $\gamma T$) through collusion or manipulation. It is also hoped this compels Byzantine replicas to maintain "good behavior" to the greatest extent possible\footnote{We abandon the quantification of Byzantine replica misbehavior costs to avoid referencing other mechanisms in the property (avoiding protocol-specific tendencies), i.e., it can only serve as an informal indicator. Similarly to the aforementioned, it is also suitable as motivation for new mechanism design.} (similar to correct replicas) to remain eligible for leadership - indirectly increasing their cost of misbehavior as discussed in Attempt 2, which has a positive impact on system overall performance.

\vspace{1mm}

\noindent \textit{Upper Bound ($\sup$)}: The bound represents the best-case effectiveness under optimal conditions. Different mechanisms may implement different scenarios when achieving it (e.g., consider the case where all faults are crash faults -- a common approach, although in practice, it has limited effect on a BFT system).

We now explain why we consider correct replicas collectively rather than individually: Correct replicas may exhibit heterogeneous capabilities (e.g., network latency). Mechanisms should permit leadership concentration on higher-performing correct replicas (e.g., those with lower average message delay) to enhance system throughput.

Leader Uniqueness combined with Timely Finalization and $\gamma$-Guarantee guarantees the correctness of a leader election mechanism, while its effectiveness needs to be measured according to parameters (theoretically). A mechanism can be correct ($\gamma$ is small) but ineffective. Standard protocols with common mechanisms (e.g., round-robin with $3f+1$ replicas) typically achieve: $(v_c = 0, T = n, \gamma = \frac{2f+1}{n}, \sup = \gamma T)$.  Reputation-based mechanisms usually require $v_c > 0$ for reputation initialization, but target a higher $\sup$ or $\gamma$.

\subsection{Comparison with State-of-the-art Works}
\label{subsec:comparison}

Our specification shares similarities with Cohen et al.'s Leader-Aware SMR framework (Carousel) \cite{cohen2022aware} and Liu et al.'s ABSE framework \cite{liu2025abse}: All aim to provide a generic foundation for optimizing leader election mechanisms. We and Carousel consider partial synchrony, while ABSE additionally considers asynchrony. Furthermore, all focus on designing new mechanisms through consensus-reputation-based approaches that rely solely on information available during consensus process - a self-contained design philosophy that promotes simplicity and avoids external dependencies. Our specification for correctness properties, however, differs significantly from both ABSE and Carousel.

ABSE categorizes leader election into three (two consistency one conflict) scenarios, with distinct requirements for each to ensure correctness. 
First, while covering both partial synchrony and asynchrony, ABSE's requirements bifurcate based on network assumptions, resulting in a specification that is neither concise nor intuitive.  
Second, ABSE ties correctness to specific protocols - a mechanism from ABSE cannot be proven correct in isolation but rather needs to be proven in conjunction with the target protocol after integration, i.e., each integration with a new protocol requires a new set of protocol-specific proofs. 
Additionally, ABSE's properties focus solely on correctness without quantifying effectiveness, hindering theoretical evaluation. 
Finally, conflict scenario correctness is established on rolling back original protocol flow (i.e., without ABSE), further undermining its protocol-independence characteristic. Overall, ABSE does not achieve true generality.

Carousel focuses exclusively on chain-based BFTs, narrowing its applicability compared to our specification. More critically, Carousel's specification shifts focus from leader election guarantees to block commit properties. For example, its Chain Quality property states: "For any block $B$ committed by a correct replica, the proportion of Byzantine blocks on $B$'s implied chain is bounded", 
which provides weaker guarantees than $\gamma$-Guarantee because Byzantine replicas can do more than just propose Byzantine blocks after being elected. Most simply, it fails to capture scenario where Byzantine leaders do not produce bad blocks but block progress by dropping all messages sent by correct replicas. Such a scenario can cause considerable performance degradation without violating Chain Quality. Furthermore, the focus on block commitment also weakens Carousel's protocol-independent aspects. 
Besides Chain-quality, Carousel also contains properties explicitly tied to specific execution scenarios (e.g., "crash-only executions"). Although Tsimos et al. \cite{tsimos2024hammerhead} extend Carousel's concepts to DAG-based BFTs, their formalization is explicitly restricted to DAG-based BFTs, creating framework divergence rather than unification, while inheriting the same specification limitations.

\section{Sliding Window Leader Election (SWLE)}
\label{subsec:mechanism_overview}
Following the abstraction in Section \ref{sec:abstraction}, we design the SWLE mechanism. 
The mechanism enhances leader election by dynamically adjusting leader nominations based on replica behavior.
The core design philosophy of SWLE centers on:
\begin{itemize}[leftmargin=13pt]
	\item \textbf{Reputation-based candidacy}: Leadership eligibility is determined by accumulated reputation scores reflecting consensus participation and behavior quality;
	\item \textbf{Decoupled leader finalization}: Separating leader election from current consensus rounds by finalizing future leaders (sufficiently in advance to maintain liveness after GST) through authenticated consensus decisions;
	\item \textbf{Sliding window management}: Maintaining a fixed-size window of views for future leaders while providing fairness through systematic election triggering across all replicas;
	\item \textbf{Byzantine-cost amplification}: Forcing Byzantine replicas to maintain good behavior to acquire leadership.
\end{itemize}

SWLE is designed as a generic enhancement operating atop any partially synchronous leader-based BFT that satisfies the following two fundamental Prerequisites (align with core correctness properties of most partially synchronous BFTs):

\begin{description}
	\item[Prerequisite 1 (Safety):] For any consensus round $r$, if two distinct proposals $prop_1$ and $prop_2$ both originate from $r$, they cannot be both finalized, each by a correct replica.
	
	\item[Prerequisite 2 (Liveness)]\footnote{Note that for chain-based protocols, where a proposal may span multiple consecutive rounds (i.e., traverse multiple leaders), it may require leaders are correct for $cr$ (the number of rounds needed to finalize a proposal) consecutive rounds. However, for simplicity, we omit detailed discussion of chain-based protocols in this work to maintain clarity of our fundamental approach.}: After GST, there exists a bounded interval $T_b$ such that if all correct replicas remain in consensus round $r$ during $T_b$, and the leader for $r$ is correct, then a decision is reached (i.e., a proposal is finalized) among all correct replicas within $r$.
\end{description}

It is always possible to set a timer (waiting interval) such that all correct replicas will eventually have at least $T_b$ overlap in common. Therefore, Prerequisite 2 can be equivalently stated as: After GST, if the leader for $r$ is correct, then a decision can be reached among all correct replicas within $r$.

The liveness guarantee admits an equivalent formulation: After GST, correct leadership in $r$ implies bounded-time finalization. More generally, it can be extended to state that for any sequence of $T$ consecutive rounds with at least $T_c$ rounds with correct leaders ($T \geq T_c> 0$), the decision (among all correct replicas) latency $T_f$ (measured in consensus rounds) is bounded (also, as noted, for pipelined protocols, $T_c>0$ alone may be insufficient. We also omit detailed discussion here).

SWLE leverages these protocol Prerequisites to provide enhanced guarantees. Any protocol satisfying Prerequisites 1-2 can integrate SWLE while preserving correctness. Furthermore, we prove that under SWLE, after GST, there exists a view $v_c$ beyond which the decision latency satisfies $T_f < n$.

\subsection{Mechanism Design}

SWLE maintains leader assignments across a sliding window of views. We define $T_z = \lceil T_f/n \rceil \cdot n$ as the sliding window size. Each view corresponds to exactly one consensus round. A view progresses to the next when either: (1) the current view's leader successfully completes the consensus round, or (2) the maximum waiting interval for a consensus round expires (timeout occurs).

Each replica $j$ maintains a \textsc{LeaderList} data structure of length $T_z + 2n$ (covering views $v_{\text{curr}}$ to $v_{\text{curr}} + T_z + 2n - 1$, where $v_{\text{curr}}$ is the current view) which stores leader information for consecutive views. Initially, \textsc{LeaderList} covers views 1 to $T_z + 2n$ and is partitioned into two segments:
\begin{itemize}
	\item For $v$ in $1$ to $T_z + n$: The \textbf{Elected Leader} ($\textsc{{leader}}_j(v)$) and the \textbf{Initial Leader} (i\textsc{leader}$_j(v)$) of $v$ are both initialized as $v\mod n$.
	\item For $v$ in $T_z + n + 1$ to $T_z + 2n$: $\textsc{{leader}}_j(v)=null$, i\textsc{leader}$_j(v)=v\mod n$
\end{itemize}

When entering $v$, $\textsc{{leader}}_j(v)$ and i\textsc{leader}$_j(v)$ are no longer modified. $j$ will first determine either $\textsc{leader}_j(v)$ if $\textsc{leader}_j(v) \neq null$, or i$\textsc{leader}_j(v)$ otherwise as $v$'s leader. If a replica $k$ gathers $2f+1$ valid votes for leadership claims (i.e., proposes a proposal containing a valid leader certificate, see Algorithm \ref{alg:cs}) in $v$, $j$ will then determine $k$. On completing $v$, $j$ remove $v$ from \textsc{LeaderList}, and then extends it by initializing $v_{\text{new}} = v + T_z + 2n$ such that $\textsc{{leader}}_j(v_{\text{new}})=null$ and i\textsc{leader}$_j(v_{\text{new}})=v_{\text{new}}\mod n$ (obviously, for any view, all correct replicas will have the same initial leader).

SWLE tracks replica behavior during consensus. Each replica $j$ maintains a local reputation matrix $S_j \in \mathbb{R}^{n \times 1}$ where $S_j[i]\geq 0$ represents $j$'s assessment of replica $i$'s reputation score.
Table \ref{tab:scoring_rules} summarizes the scoring mechanism employed by SWLE. For $j$, $\text{Score}_j$ is updated according to these rules:

\begin{table}[h]
	\caption{SWLE Reputation Scoring Rules}
	\label{tab:scoring_rules}
	\vspace{-1mm}
	\begin{tabular}{|c|c|c|c|}
		\hline
		\textbf{Rule} & \textbf{Trigger Event} & \textbf{Affected Replica} & \textbf{Score Change{$^{3,4}$}} \\
		\hline
		\textbf{R1} & Enter new view $v$ & Leader{$^5$} of $v$ & $^6$ $\alpha_0=-1$  \\
		\hline
		\textbf{R2} & View $v$ timeout & Leader{$^5$} of $v$ & $\alpha_1=-n$  \\
		\hline
		\textbf{R3} & Proposal finalized & Proposal's leader & $\alpha_2=+1$  \\
		\hline
		\textbf{R4} & \begin{tabular}{@{}c@{}}Complete view $v$\\ successfully as\\ $v$'s leader.\end{tabular} & \begin{tabular}{@{}c@{}}Consensus-promoting \\ replicas{$^7$} of $v$ \end{tabular} & $\alpha_3=+1/n$ \\
		\hline
		\textbf{R5} & \begin{tabular}{@{}c@{}}{$^4$} $v\bmod\Theta=0$ \\ $\Theta $$= $$\text{max}(300,$$ 10n)$\end{tabular} & All replicas & $\alpha_4=+1$ \\
		\hline
	\end{tabular}
 	\\ 
 	\\{${}^3$When rule Ri is triggered at replica $j$, for replica $k$ in Ri.Affected\_Replicas: $S_j[k] = \max(S_j[k] + \alpha_i, 0)$}.  
 	\\{${}^4$Values can be adjusted (but note that this may affect {\scriptsize SWLE's} correctness and effectiveness). Initially, $S_j[k] =|\alpha_4|$ for all $j,k\in\mathcal{R}$.} 	
 	\\{${}^5$Determination priority: Replica that gathers $2f+1$ valid votes for leadership claims $>$ Elected Leader $>$ Initial Leader.}
 	\\ {${}^6$$\forall k$$\in$$\mathcal{R}$, $S_j[k]$$\ge$$ |\alpha_0|$ is required to maintain $k$'s leadership eligibility at $j$.}
 	\\${}^7$Any replica that contributes one of the first $2f+1$ valid votes.
\end{table}

Where R1 is the score deduction rule for view entry (leader accountability), R2 is the score deduction rule for a failed leadership (timeout penalty), R3 is the score reward rule for proposal finalization (success reward), and R4 is score reward rule for timely voting (participation reward). These rules ensure  Byzantine leaders incur score deductions, which require subsequent good behavior to offset. R5 is the periodic normalization rule that, along with non-negative score constraint, prevents permanent marginalization and maintains system adaptability. This is particularly important in scenarios where, before GST, Byzantine replicas might persistently target certain correct replicas, causing their scores to decrease indefinitely at others, making them hard to recover leadership eligibility even after GST (i.e., the time is unbounded).

In SWLE, replicas collaboratively elect leaders for future views through authenticated consensus decisions. It employs a sophisticated candidate selection algorithm that ensure fairness through systematic election triggering across all replicas. Replica $j$ include additional fields $l_d$ (determined leader for $v$) and $\text{Cand}_j(v_{\text{target}})$ (leadership candidates for $v_{\text{target}}$) in all voting messages for view $v$ (including votes during view-change to $v+1$). For the votes to be valid, besides standard consensus checks, $l_d$ must be non-empty and contain a valid ID, $\text{Cand}(v_{\text{target}})$ must be non-empty, non-duplicated and contain valid IDs. The target view $v_{\text{target}}$ is computed as follows:

\vspace{-2mm}
\begin{align}
	\nonumber
	v_{\text{target}} = \begin{cases} 
		v + n + T_z + \left(\left\lfloor v/n \right\rfloor \bmod n\right) & \text{if Case (1)} \\
		v + n + T_z & \text{if Case (2)} \\
		v + T_z + \left(\left\lfloor (v-1)/n \right\rfloor \bmod n\right) & \text{otherwise}
	\end{cases}
\end{align}

\begin{itemize}[topsep=0pt,itemsep=0pt,parsep=0pt]
		\vspace{-0.5mm}
	\item Case (1): $v + \left(\left\lfloor v/n \right\rfloor \bmod n\right) \leq \left\lceil v/n \right\rceil \cdot n$
	\item Case (2): $v/n \in \mathbb{Z} \land (v/n) \bmod n = 1$
\vspace{-0.5mm}
\end{itemize}

\begin{lemma}\label{lem:original}
	Let $n$ be a positive integer and $x$ a natural number, when $v$ traverses the set $\{nx + 1, \dots, n(x+1)\}$, the values $v_{\text{target}}$ traverse the set $\{n(x+1) + 1, n(x+1) + 2, \dots, n(x+2)\}$ exactly once.
\end{lemma}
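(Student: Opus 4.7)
The plan is to parameterize $v = nx + k$ with $k \in \{1, 2, \dots, n\}$ and $r = x \bmod n \in \{0, 1, \dots, n-1\}$, then carry out a finite case analysis showing that the induced map $k \mapsto v_{\text{target}}$ is a bijection onto $\{n(x+1)+1, \dots, n(x+2)\}$. Since the domain and codomain both have cardinality $n$, it suffices to establish injectivity (equivalently, surjectivity) after the case split.

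First, I would compute the auxiliary quantities that feed the piecewise conditions: $\lfloor v/n\rfloor = x$ for $k \leq n-1$ and $\lfloor v/n\rfloor = x+1$ for $k = n$, while $\lfloor(v-1)/n\rfloor = x$ uniformly on the block. Both floor-mod-$n$ residues therefore collapse to either $r$ or $(x+1)\bmod n$, and the branch conditions reduce to simple arithmetic predicates in $k$ and $r$: Case (1) becomes $k + r \leq n$ in the interior of the block, Case (2) activates only at the unique view that is a multiple of $n$ with quotient $\equiv 1 \pmod n$, and every other view falls into the ``otherwise'' branch. Substituting these into the three branches gives in the interior $v_{\text{target}} = v + n + r$ when $k + r \leq n$ and $v_{\text{target}} = v + r$ when $k + r > n$, which together realize a rotational shift of the form $k \mapsto ((k + r - 1)\bmod n) + 1$ landing inside the next block. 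Bijectivity on the interior then follows because adding $r$ modulo $n$ permutes a complete residue system.

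The hard part will be the boundary value $k = n$ (equivalently $v = n(x+1)$): here $\lfloor v/n\rfloor$ jumps from $x$ to $x+1$, so Cases (1) and (2) interact subtly with the choice of $r$, and the resulting $v_{\text{target}}$ must fill precisely the one residue slot in $\{n(x+1)+1,\dots,n(x+2)\}$ that the interior rotation has left empty. I would split this boundary analysis on the three regimes $r = 0$, $1 \leq r \leq n-2$, and $r = n-1$, checking in each regime that exactly one of the three branches captures $v = n(x+1)$ and that the produced $v_{\text{target}}$ lands at the missing position without colliding with any interior image. Once this boundary case is pinned down, combining it with the interior rotation immediately yields the claimed bijection, and the lemma follows.
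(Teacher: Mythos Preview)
Your approach is essentially the paper's. Both parameterize $v = nx + (\text{offset})$ with offset in $\{1,\dots,n\}$, record the residue $x \bmod n$, and reduce the three-branch definition to the observation that the map is a cyclic shift onto the next block of $n$ views. The only difference is packaging: the paper simply asserts a uniform two-case closed form
\[
v_{\text{target}} =
\begin{cases}
nx + n + (r+a) & \text{if } r+a \le n,\\
nx + r + a & \text{if } r+a > n
\end{cases}
\]
(its $r,a$ are your $k,r$), declared valid across the entire block including the right endpoint, and then checks range, surjectivity, and injectivity of that closed form by explicit case analysis. You instead derive the same two-case formula only on the interior $k \le n-1$, dispatch bijectivity there in one line as a rotation of a complete residue system, and defer the endpoint $k=n$ to a separate split on $r\in\{0\},\{1,\dots,n-2\},\{n-1\}$. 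Your route is more cautious about the boundary, where $\lfloor v/n\rfloor$ jumps from $x$ to $x+1$ and Cases~(1)/(2) interact nontrivially---a step the paper absorbs silently into its asserted simplification---but the underlying argument and the cyclic-shift picture are identical.
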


\begin{proof}
	Let $r = v - nx$ and $a = x \mod n$. Since $v \in \llbracket nx + 1, n(x+1) \rrbracket$, we have $r \in \{1, 2, \dots, n\}$ and $a \in \{0, 1, \dots, n-1\}$. The expression for $v_{\text{target}}$ simplifies to:
	\[
	v_{\text{target}} = 
	\begin{cases} 
		nx + n + (r + a) & \text{if } r + a \leq n \\
		nx + r + a & \text{if } r + a > n
	\end{cases}
	\]
	The target interval is $\{t_k \mid t_k = nx + n + k,\  k = 1, 2, \dots, n\}$. We prove the mapping $v \mapsto v_{\text{target}}$ is a bijection onto this set.
	
	\vspace{1mm}
	\noindent \underline{{\bfseries Step 1: Validity of $v_{\text{target}}$ in the target interval.}}
	\begin{itemize}
		\item If $r + a \leq n$, then $v_{\text{target}} = nx + n + (r + a)$. Since $r \geq 1$ and $a \geq 0$, we have $r + a \geq 1$, so $v_{\text{target}} \geq nx + n + 1$. Since $r + a \leq n$, we have $v_{\text{target}} \leq nx + n + n = nx + 2n$.
		\item If $r + a > n$, then $v_{\text{target}} = nx + r + a$. Since $r + a \geq n + 1$, we have $v_{\text{target}} \geq nx + n + 1$. Since $r \leq n$ and $a \leq n-1$, we have $r + a \leq 2n - 1$, so $v_{\text{target}} \leq nx + 2n - 1 \leq nx + 2n$.
	\end{itemize}
	Thus, $v_{\text{target}} \in \{nx + n + 1, \dots, nx + 2n\}$ for all $v$.
	
	\vspace{2mm}
	\noindent\underline{{\bfseries Step 2: Surjectivity (coverage).}}
	For each $k \in \{1, 2, \dots, n\}$, we construct $r \in \{1, 2, \dots, n\}$ such that $v_{\text{target}} = t_k \triangleq nx + n + k$:
	\begin{itemize}
		\item If $k - a \geq 1$, set $r = k - a$. Then $r + a = k \leq n$, so $v_{\text{target}} = nx + n + (r + a) = nx + n + k = t_k$. Also, $r = k - a \geq 1$ and $r = k - a \leq k \leq n$.
		\item If $k - a \leq 0$ (i.e., $k \leq a$), set $r = n + k - a$. Since $k \leq a \leq n-1$, we have $k \leq n-1$. Now:
		\[
		r + a = (n + k - a) + a = n + k > n \quad (\text{since } k \geq 1),
		\]
		so $v_{\text{target}} = nx + r + a = nx + n + k = t_k$. Further:
		\[
		r = n + k - a \geq n + k - (n-1) = k + 1 \geq 1 \quad (\text{since } k \geq 1),
		\]
		and since $k \leq a$,
		$r = n + k - a \leq n + a - a = n.$
	\end{itemize}
	Thus, every $t_k$ is covered.
	
	\vspace{2mm}
	\noindent\underline{{\bfseries Step 3: Injectivity (no overlaps).}}
	Suppose two distinct $r_1, r_2 \in \{1, \dots, n\}$ map to the same $v_{\text{target}}$. Consider cases:
	\begin{itemize}
		\item If both satisfy $r_i + a \leq n$, then:
		\[
		nx + n + (r_1 + a) = nx + n + (r_2 + a) \implies r_1 = r_2.
		\]
		\item If both satisfy $r_i + a > n$, then:
		\[
		nx + r_1 + a = nx + r_2 + a \implies r_1 = r_2.
		\]
		\item If one satisfies $r_1 + a \leq n$, the other $r_2 + a > n$, then:
		\[
		nx + n + (r_1 + a) = nx + r_2 + a \implies n + r_1 = r_2.
		\]
		Since $r_1 \geq 1$, we have $r_2 \geq n + 1 > n$, contradicting $r_2 \leq n$.
	\end{itemize}
	Thus, distinct $v$ map to distinct $v_{\text{target}}$.
	
	Since the mapping is injective and surjective onto the target set of size $n$, it is a bijection.
\end{proof}

\begin{proposition}\label{prop:1}
	Let $T_z$ be a positive integer multiple of $n$. For any natural number $x$ and positive integer $n$, when $v$ traverses $\{nx + 1, \dots, n(x+1)\}$, the values $v_{\text{target}}$ traverse $\{n(x+1) + 1 + T_z, \dots, n(x+2) + T_z\}$ exactly once.
\end{proposition}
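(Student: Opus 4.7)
The plan is to reduce Proposition 1 directly to Lemma 1 by observing that $T_z$ enters the definition of $v_{\text{target}}$ only as a uniform additive constant. Inspecting the three branches of the piecewise formula---Case (1): $v + n + T_z + (\lfloor v/n \rfloor \bmod n)$; Case (2): $v + n + T_z$; otherwise: $v + T_z + (\lfloor (v-1)/n \rfloor \bmod n)$---we see $T_z$ appears as a standalone summand in each branch, and the branch condition itself depends only on $v$ and $n$, not on $T_z$. Therefore $v_{\text{target}}$ decomposes cleanly as $\hat{v}_{\text{target}} + T_z$, where $\hat{v}_{\text{target}}$ is precisely the shift-free map analyzed in Lemma 1 (the one whose simplified form, via $r = v - nx$ and $a = x \bmod n$, is $nx + n + (r+a)$ when $r + a \leq n$ and $nx + r + a$ otherwise).

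Given this decomposition, I would invoke Lemma 1 directly: as $v$ ranges over $\{nx+1, \ldots, n(x+1)\}$, the values $\hat{v}_{\text{target}}$ traverse $\{n(x+1)+1, \ldots, n(x+2)\}$ exactly once, i.e. the induced map is a bijection between these two $n$-element sets. Translating the image pointwise by the constant $T_z$ preserves bijectivity (injectivity is trivially maintained, and the image is shifted as a set), so $v \mapsto v_{\text{target}} = \hat{v}_{\text{target}} + T_z$ is a bijection from $\{nx+1, \ldots, n(x+1)\}$ onto $\{n(x+1)+1+T_z, \ldots, n(x+2)+T_z\}$, which is exactly the claim.

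The only step that requires any care is the initial reduction---verifying by case inspection that $T_z$ always enters additively and that the piecewise conditions are $T_z$-independent. Beyond that, no substantive new analysis is needed, so the ``main obstacle'' is really just presenting the reduction cleanly rather than any combinatorial difficulty. I would also remark that the hypothesis that $T_z$ is a positive multiple of $n$ is not strictly required for the bijection itself---any integer shift would yield the analogous statement---but is retained here for consistency with the SWLE sliding-window parameter $T_z = \lceil T_f/n \rceil \cdot n$ used throughout the mechanism design, so that the target intervals align with window boundaries.
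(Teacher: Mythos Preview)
Your proposal is correct and follows essentially the same approach as the paper: both observe that $v_{\text{target}} = v_{\text{target}}^0 + T_z$ by case inspection, invoke Lemma~\ref{lem:original} for the bijectivity of the shift-free map, and then note that a constant translation preserves bijectivity. Your additional remark that the ``multiple of $n$'' hypothesis on $T_z$ is not actually needed for the bijection is a nice observation not made explicit in the paper.
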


\begin{proof}\let\qed\relax
	Observe the shifted target $v_{\text{target}}$ can be expressed as:
	\[
	v_{\text{target}} = v_{\text{target}}^0 + T_z
	\]
	where $v_{\text{target}}^0$ is the original mapping defined in Lemma \ref{lem:original}. This follows from direct comparison of the cases:
	\begin{itemize}
		\item In \textit{"if Case (1)"}: 
		\[
		v_{\text{target}} = \underbrace{v + n + \left(\left\lfloor v/n \right\rfloor \mod n\right)}_{\text{original } v_{\text{target}}^0} + T_z
		\]
		\item In \textit{"else if Case (2)"}: 
		\[
		v_{\text{target}} = \underbrace{v + n}_{\text{original } v_{\text{target}}^0} + T_z
		\]
		\item In \textit{"otherwise"}: 
		\[
		v_{\text{target}} = \underbrace{v + \left(\left\lfloor (v-1)/n \right\rfloor \mod n\right)}_{\text{original } v_{\text{target}}^0} + T_z
		\]
	\end{itemize}
	
	By Lemma \ref{lem:original}, as $v$ traverses $\{nx + 1, \dots, n(x+1)\}$, the values $v_{\text{target}}^0$ traverse $\mathscr{T} \triangleq \{n(x+1) + 1, \dots, n(x+2)\}$ bijectively. Since $T_z$ is a constant shift:
	\[
	v_{\text{target}} = v_{\text{target}}^0 + T_z
	\]
	traverses the shifted set $\{t + T_z \mid t \in \mathscr{T}\} = \{n(x+1) + 1 + T_z, \dots, n(x+2) + T_z\}$ bijectively.
	
	The shift preserves bijectivity because:
	\begin{itemize}
		\item \textbf{Surjectivity:} For any $s \in \{n(x+1) + 1 + T_z, \dots, n(x+2) + T_z\}$, let $t = s - T_z$. Then $t \in \mathscr{T}$, so there exists $v$ with $v_{\text{target}}^0 = t$, giving $v_{\text{target}} = t + T_z = s$.
		\item \textbf{Injectivity:} If $v_{\text{target}}(v_1) = v_{\text{target}}(v_2)$, then $v_{\text{target}}^0(v_1) + T_z = v_{\text{target}}^0(v_2) + T_z$ implies $v_{\text{target}}^0(v_1) = v_{\text{target}}^0(v_2)$, so $v_1 = v_2$ by Lemma \ref{lem:original}.
	\end{itemize}
\end{proof}
 
Proposition \ref{prop:1} guarantees that the election process of each view (except the initial $T_z+n$ views) is initiated exactly once, preventing both election skips and duplicates, while ensuring fair election triggering across all replicas (i.e., every replica has equal opportunity to initiate elections for views where any given replica serves as the initial leader) thus further reducing the leverage that Byzantine replicas can exert over the election process (e.g., the chance of Byzantine replicas controlling election initiation of Byzantine replicas is reduced) and increasing the cost of manipulation for Byzantine replicas.

A replica $j$ generates $\text{Cand}(v_{\text{target}})$ in $v$ as in Algorithm \ref{alg:cag}:

\begin{algorithm}[H]
	\caption{Candidate Array Generation}
		\label{alg:cag}
	\small
	\begin{algorithmic}
		\State Initialization: $\text{Cand}_j(v_{\text{target}}) \leftarrow \emptyset$
	\end{algorithmic}
\vspace{-1mm}
	\begin{algorithmic}[1]
		\For{$i = 0$ to $n-1$}
		$view \leftarrow v_{\text{target}} + i$
		\State $candidate \leftarrow \text{i}\textsc{leader}_j(view)$
		\If{$S_j[candidate] \geq |\alpha_0|$}
		\State $\text{Cand}_j(v_{\text{target}}) \leftarrow \text{Cand}_j(v_{\text{target}}) \cup \{\text{candidate}\}$
		\EndIf
		\EndFor
		\If{$\text{Cand}_j(v_{\text{target}}) = \emptyset$}
		\State  \Comment{apply periodic normalization (individually) and recompute}
		\State \textbf{for} $k \in \mathcal{R}$: \textbf{do} $S_j[k]$  $\leftarrow S_j[k] + \alpha_4$
		\State \textbf{goto} line 1
		\EndIf
		\State \textbf{return} $\text{Cand}_j(v_{\text{target}})$
	\end{algorithmic}
	\vspace{-1mm}
\end{algorithm} 

The leader of view $v+1$ collects $2f+1$ valid votes containing $\text{Cand}(v_{\text{target}})$ values and the same $l_d$ matching the leader's ID. In chain-based protocols, votes for $v$ are inherently sent to the leader of $v+1$. In non-chained protocols, 
these may be votes from standard phases for $v$ or view-change process to $v+1$. The leader determination process is then performed as shown in Algorithm \ref{alg:cs}.

\begin{algorithm}[H]
	\caption{Candidate Selection and Proof Packaging}
	\label{alg:cs}
	\small
	\begin{algorithmic}
		\State Required: $\mathcal{V}$ \Comment{set of $2f$$+$$1$ valid votes containing $\text{Cand}(v_{\text{target}})$} 
		\State Initialization: $\text{count}[i] \leftarrow 0$ for all $i \in \mathcal{R}$
	\end{algorithmic}
	\begin{algorithmic}[1]
		\State $\mathcal{C}\leftarrow \{vote.\text{Cand}_k(v_{\text{target}})| vote\in\mathcal{V}\} $ \Comment{collect candidate arrays}
		\For{$\text{Cand}_k(v_{\text{target}}) \in \mathcal{C}$}
		\For{each candidate $c \in \text{Cand}_k(v_{\text{target}})$}
		\State $\text{count}[c] \leftarrow \text{count}[c] + 1$
		\EndFor
		\EndFor
		\State $\mathcal{W} \leftarrow \{c | \text{count}[c] \geq f+1\}$ \Comment{find c within $\geq f+1$ votes.}
		\If{$\mathcal{W} \neq \emptyset$}
		\State Select $l \leftarrow \arg\min_{c \in \mathcal{W}} \{v'| \text{i\textsc{leader}}(v') = c \land v' \geq v_{\text{target}}\}$
		\Else \ $l \leftarrow \bot$
		\EndIf
		\State \textbf{return} $\text{LCert}_{v_{\text{target}}} \leftarrow \langle l, \mathcal{V} \rangle$ \Comment{create a leader certificate}
	\end{algorithmic}
	\vspace{-1mm}
\end{algorithm}

The leader certificate $\text{LCert}_{v_{\text{target}}}$ contains both the selected leader and cryptographic proof (the original $2f+1$ votes). A replica can verify it by performing the same computation as the leader when creating it. $\text{LCert}_{v_{\text{target}}}$ is embedded in new proposals for $v+1$ (as an additional field, though in chain-based protocols, this can be integrated into existing quorum certificates \cite{yin2019hotstuff} that already contain voting proofs created by votes for the previous view), and its validity should be considered during standard proposal validation.

When a proposal containing $\text{LCert}_{v_{\text{target}}}$ is finalized, each replica $j$ updates its \textsc{LeaderList} as in Algorithm \ref{alg:lup}.

\begin{algorithm}[H]
	\caption{\textsc{LeaderList} Update}
	\label{alg:lup}
	\small
	\begin{algorithmic}
		\State Required: $\text{LCert}_{v_{\text{target}}}$
	\end{algorithmic}
	\begin{algorithmic}[1]
		\State Extract $\langle l, \mathcal{V} \rangle$ from $\text{LCert}_{v_{\text{target}}}$
		\If{$l = \bot$}
		$\textsc{leader}_j(v_{\text{target}}) \leftarrow \text{i\textsc{leader}}_j(v_{\text{target}})$ \Comment{fallback}
		\Else \  $\textsc{leader}_j(v_{\text{target}}) \leftarrow l$ 
		\EndIf
		\Statex \Comment{$v^*$ is the highest view lower than $v_{\text{target}}$ whose leader is elected}
		\State $v^* \leftarrow \arg\max_v\{v' < v_{\text{target}} | \textsc{leader}_j(v') \neq \text{null}\}$
		\For{$v' = v^* + 1$ to $v_{\text{target}} - 1$} \Comment{fill intermediate views}
		\State $\textsc{leader}_j(v') \leftarrow \text{i\textsc{leader}}_j(v')$
		\EndFor
	\end{algorithmic}
	\vspace{-1mm}
\end{algorithm}

Lines 4-6 fill gaps using default initial leaders to ensure continuity in \textsc{LeaderList}, trying to maintain progress particularly in pre-GST or Byzantine leader scenarios. The integration of election results into consensus finalizations ensures leader updates are atomic with consensus decisions, eliminating race conditions between leader election and view progression. Meanwhile, consensus safety prevents inconsistent leader assignments within a view across correct replicas.

\subsection{Mechanism Analysis}
	
	We now provide formal correctness analysis for SWLE. 
	
	\begin{theorem}[Leader Uniqueness]
		\label{thm:leader-uniqueness}
		For any view $v$, there do not exist two distinct replicas $j$ and $k$  both can gather $2f+1$ distinct valid votes for conflicting leadership claims in $v$.
	\end{theorem}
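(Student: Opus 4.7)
The plan is a classical quorum-intersection argument tailored to SWLE's vote format. I would argue by contradiction: suppose two distinct replicas $j$ and $k$ each gather $2f+1$ distinct valid votes endorsing their respective leadership claims in view $v$. Call the two sets of voters $Q_j$ and $Q_k$. Since $|Q_j|=|Q_k|=2f+1$ and $|\mathcal{R}|=n=3f+1$, inclusion--exclusion gives $|Q_j\cap Q_k|\geq (2f+1)+(2f+1)-(3f+1)=f+1$. With at most $f$ Byzantine replicas, at least one correct replica $c\in Q_j\cap Q_k$ contributed a vote to both quorums.

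Next I would use the SWLE vote structure: every voting message a correct replica emits for view $v$ carries an $l_d$ field holding that replica's current determination of $v$'s leader, and a vote only counts toward $X$'s leadership claim when $l_d = X$. Hence $c$ must have produced at least one vote for view $v$ with $l_d = j$ and at least one with $l_d = k$. I would then appeal to the base BFT's voting discipline (inherited through Prerequisites 1--2): a correct replica casts at most one signed vote per voting step in a given view, and its $l_d$ is a deterministic function of its local state at the moment of signing. Combined with the SWLE rule that a correct replica only revises its determination to some $l'$ upon observing a \emph{valid} $(2f+1)$-vote certificate for $l'$, any sequence of votes from $c$ in view $v$ must correspond to a monotonic evolution of its determined leader that never endorses two distinct candidates with $2f+1$ valid votes each --- precisely the situation we assumed. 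This yields the contradiction and closes the proof.

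The main obstacle, I expect, will be step~2 rather than the arithmetic: I need to rule out the apparent loophole created by SWLE's mid-view determination update. The key is that signatures are non-repudiable and per-phase, so counting $c$ in both $Q_j$ and $Q_k$ would force $c$ to have signed two conflicting vote messages; any update of $c$'s determination presupposes that \emph{exactly one} claim already collected $2f+1$ valid votes, which makes the appeal to uniqueness circular unless one observes that the update itself is triggered only after the quorum exists, not concurrently with forming a second one. A clean way to make this rigorous is to fix the first correct vote $c$ sends in view $v$ and show that every later vote of $c$ in $v$ carries the same $l_d$ or a strictly later, certificate-justified leader, but never a competing uncertified one --- so $c$ cannot lie in both quorums, contradicting $|Q_j\cap Q_k|\geq f+1$ containing a correct member. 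The remaining pigeonhole and counting steps are routine.
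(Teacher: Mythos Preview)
Your quorum-intersection route is genuinely different from the paper's. The paper never isolates a single overlapping voter; it argues temporally. Its three bullets are: (i)~invoke Prerequisite~1 (base-protocol safety) to conclude that no two correct replicas can have finalized different values for $\textsc{leader}_\cdot(v)$, so together with the deterministic fallback $\text{i}\textsc{leader}(v)$ the initial determination is essentially shared; (ii)~observe that each valid vote's $l_d$ field ties it to exactly one claimant; (iii)~let $p$ be the \emph{first} replica to accumulate $2f{+}1$ valid votes --- the $\ge f{+}1$ correct voters among them cast their ballots before any leader certificate existed, so their $l_d=p$ reflects their initial determination, and they never later switch to $q$, leaving $q$ at most $2f$ potential voters. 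The crucial ingredients are thus Prerequisite~1 and a first-to-gather ordering, not a per-replica analysis of one intersecting $c$.

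Your proposal has two gaps. First, ``the base BFT's voting discipline (inherited through Prerequisites~1--2): a correct replica casts at most one signed vote per voting step'' is a misreading --- those prerequisites are safety and post-GST liveness of the base protocol and say nothing about per-phase vote uniqueness; a correct replica \emph{may} legitimately vote in several phases of $v$ with different $l_d$ fields once its determination has been updated. Second, and more importantly, your circularity paragraph does not actually close: you show that $c$ can switch to $k$ only after seeing a valid certificate for $k$, but you never exclude that this certificate was formed from \emph{other} correct replicas whose initial determination happened to be $k$, after which $c$'s later $l_d=k$ vote could land in a subsequent quorum $Q_k$ --- so ``$c$ cannot lie in both quorums'' does not follow from the monotonicity you describe. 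To make your route rigorous you must work with the \emph{first} certificate each claimant ever forms: every correct vote in that certificate was cast before any certificate for that claimant existed and hence records the voter's initial determination; quorum intersection on the two first certificates then forces one correct replica to have two distinct initial determinations, which is the clean contradiction. Alternatively you can import the paper's Prerequisite~1 step to pin down a single shared initial determination up front, after which no competing bootstrap is possible and your argument collapses to one line.
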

	
	\begin{proof}
		We prove this by contradiction. Suppose there exist two distinct replicas $p$ and $q$ that both gather $2f+1$ distinct votes for conflicting leadership claims in view $v$.
		\begin{itemize}
			\item By Prerequisite 1, no two conflicting proposals originate from the same consensus round can be both finalized, each by a correct replica. Therefore, there cannot be two correct replicas $j$ and $k$ such that $\textsc{leader}_j(v)=p, \textsc{leader}_k(v)=q$. 
			\item Votes from any correct replica contain the leader ($l_d$ field) that it determines. A Byzantine replica cannot use votes that determine other replicas for its own leadership claim.
			\item Any correct replica $j$ only determine either $\textsc{leader}_j(v)$ (if $\textsc{leader}_j(v)\ne\bot$) or i$\textsc{leader}_j(v)$ (otherwise) as $v$'s leader before a replica gathers $2f+1$ valid votes for leadership claims in $v$. Without loss of generality, we assume that $p$ first gathers $2f+1$ valid votes and claims (proposes a proposal containing a valid leader certificate). At least $f+1$ of the $2f+1$ votes are from correct replicas, whose votes in $v$ only determine $p$. Thus, $q$ can only gather votes (which determine $q$ and are valid) from at most $2f$ replicas (of which $f$ are correct), leading to a contradiction.
		\end{itemize}
		Therefore, Leader Uniqueness holds.
	\end{proof}
	
	\begin{lemma}
		\label{lem:1}
		For any correct replica \(j\), if a proposal in view \(v\) finalizes the leader election for target view \(v_{\text{target}}\), then for all views \(1 \leq \textit{view} \leq v_{\text{target}}\), \(\textsc{leader}_j(\textit{view}) \neq \text{null}\).
	\end{lemma}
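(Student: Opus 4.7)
The plan is to prove this by induction on the sequence of leader-election finalizations processed by replica $j$, strengthening the claim to the invariant that after the $k$-th such finalization the set of views $v'$ with $\textsc{leader}_j(v') \neq \text{null}$ is exactly the contiguous prefix $[1, M_k]$, where $M_k$ is the running maximum of all target views finalized so far (with $M_0 := T_z + n$).

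For the base case I would appeal directly to the initialization clause of \textsc{LeaderList}: every $v \in [1, T_z + n]$ is assigned $\textsc{leader}_j(v) = v \bmod n$, which is non-null, establishing the invariant before any invocation of Algorithm \ref{alg:lup}.

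For the inductive step, let $v_{\text{target}}^{(k+1)}$ denote the target of the $(k+1)$-th finalization. If $v_{\text{target}}^{(k+1)} \leq M_k$, then by the inductive hypothesis $\textsc{leader}_j(v_{\text{target}}^{(k+1)})$ is already non-null, the search on line 3 returns $v^* = v_{\text{target}}^{(k+1)} - 1$, the fill loop in lines 4--6 is empty, and line 2 merely rewrites the entry with another non-null value, so $M_{k+1} = M_k$. If instead $v_{\text{target}}^{(k+1)} > M_k$, the inductive hypothesis further guarantees that every view in $(M_k, v_{\text{target}}^{(k+1)})$ is still null, so $v^* = M_k$; the fill loop then assigns $\textsc{leader}_j(v') \leftarrow \text{i}\textsc{leader}_j(v')$ --- non-null by the initialization of $\text{i}\textsc{leader}$ --- for every $v' \in [M_k + 1, v_{\text{target}}^{(k+1)} - 1]$, and line 2 writes a non-null value to $\textsc{leader}_j(v_{\text{target}}^{(k+1)})$, producing the new prefix $[1, v_{\text{target}}^{(k+1)}]$ and $M_{k+1} = v_{\text{target}}^{(k+1)}$. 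The lemma then follows immediately, since the finalization in view $v$ for target $v_{\text{target}}$ is some $(k+1)$-th event and $v_{\text{target}} \leq M_{k+1}$.

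The main (and essentially only) nontrivial obstacle I anticipate is confirming that $\text{i}\textsc{leader}_j(v')$ is well-defined for every view $v'$ visited by the fill loop. This reduces to checking that any target $v_{\text{target}}$ produced by the mapping in Proposition \ref{prop:1} satisfies $v_{\text{target}} \leq v + T_z + 2n - 1$, so each $v' \in [M_k + 1, v_{\text{target}} - 1]$ lies inside the active window $[v_{\text{curr}}, v_{\text{curr}} + T_z + 2n - 1]$ of \textsc{LeaderList}, whose $\text{i}\textsc{leader}_j$ entries are populated either by the original initialization or by the extension rule fired upon completion of each earlier view.
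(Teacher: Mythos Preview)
Your proof is correct and rests on the same two ingredients as the paper's argument: the initialization of \textsc{LeaderList} on $[1, T_z+n]$ and the gap-filling step in Algorithm~\ref{alg:lup}. The main difference is structural: you cast the argument as an explicit induction on finalizations with the strengthened invariant that the non-null views form \emph{exactly} the prefix $[1, M_k]$, whereas the paper gives an informal sketch and additionally invokes Proposition~\ref{prop:1}. Your version is cleaner on this point---Proposition~\ref{prop:1} (that each target view is hit exactly once) is in fact unnecessary here, since the gap-filling in lines 4--6 of Algorithm~\ref{alg:lup} produces a contiguous non-null prefix regardless of which elections are actually finalized. Your flag about the well-definedness of $\text{i}\textsc{leader}_j(v')$ inside the active window is a legitimate technical check that the paper omits.
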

	
	\begin{proof}
		By the Algorithm \ref{alg:lup}, whenever a finalizing proposal containing \(\text{LCert}_{v_{\text{target}}}\) is processed:  
		
		\noindent 1. If the elected leader \(l \neq \bot\), \(\textsc{leader}_j(v_{\text{target}})\) is set to \(l\). \\ 
		2. If \(l = \bot\), \(\textsc{leader}_j(v_{\text{target}})\) defaults to \(\text{i\textsc{leader}}_j(v_{\text{target}})\).  
		3. The algorithm then fills all intermediate views \(v^* + 1\) to \(v_{\text{target}} - 1\) (where \(v^*\) is the highest view \(< v_{\text{target}}\) with non-null leader) by setting \(\textsc{leader}_j(v') = \text{i\textsc{leader}}_j(v')\) for each \(v'\).  
		
		By Proposition \ref{prop:1}, leader elections are uniquely triggered for every view beyond the initial \(T_z + n\) views. Combined with the initialization of \(\textsc{leaderlist}\) (covering views \(1\) to \(T_z + n\)) and the gap-filling step in Algorithm \ref{alg:lup}, all views \(\leq v_{\text{target}}\) are guaranteed to have non-null leader assignments.  
	\end{proof}
	
	\begin{lemma}
		\label{lem:2}
		After GST, there exists a view \(v_c\) such that for any correct replica \(j\) and any \(v_j > v_c\) where \(i\textsc{leader}_j(v_j) = j\), \(j\) gathers \(2f+1\) votes for its leadership claim in \(v_j\).  
	\end{lemma}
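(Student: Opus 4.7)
The plan is to establish, after GST, a chain of consequences that forces each correct replica's vote in $v_j$ toward $j$, yielding the required $2f+1$ votes. The chain is: reputation recovery, deterministic election of $j$ in any finalized certificate targeting $v_j$, universal recognition of $j$ as $v_j$'s leader by correct replicas, and finally vote aggregation at $j$.

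The first step, and the main obstacle, is reputation recovery: I will argue there exists $v_r > \text{GST}$ such that $S_k[j] \geq |\alpha_0|$ for every pair of correct replicas $j,k$ at every view $v \geq v_r$. The key observation is that post-GST, Prerequisite 2 guarantees any correct leader's view finalizes without timeout, so R2 never fires for correct $j$ and R1 ($-|\alpha_0|$) is always paired with R3 ($+\alpha_2$) in $j$'s leadership cycles; combined with the $\max(\cdot,0)$ floor, these yield a net non-decrease in $S_k[j]$ per cycle and even a strictly positive gain whenever $S_k[j]$ was zero before R1 fires. Rule R5 additionally adds $+\alpha_4$ every $\Theta$ views to all scores, so $S_k[j]$ surpasses $|\alpha_0|$ within at most $\Theta$ views after GST and remains above it thereafter. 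The delicate subtlety is the transition window where $j$ may still lead with a zero score at some correct $k$; a case analysis of R1 followed by R3, together with R5's periodic drift, resolves this.

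The second step uses Proposition 1 to locate the unique prior view $v_e$ whose election targets $v_j$, which satisfies $v_e \geq v_j - T_z - n$. Choosing $v_c \geq v_r + T_z + n$ ensures $v_e > v_r$, so every correct $k$ has $S_k[j] \geq |\alpha_0|$ when generating $\text{Cand}_k(v_j)$. Since $i\textsc{leader}(v_j) = j$, Algorithm 2 includes $j$ in $\text{Cand}_k(v_j)$ for every correct $k$. Any valid leader certificate for $v_j$ aggregates $2f+1$ votes containing at least $f+1$ Cand arrays from correct replicas, so the count for $j$ is at least $f+1$, placing $j \in \mathcal{W}$ in Algorithm 3. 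Because $v_j$ is itself the smallest view $\geq v_{target} = v_j$ with $i\textsc{leader} = j$, the argmin selection rule deterministically picks $l = j$. Combined with Prerequisite 1 and Lemma 1, any finalization of a certificate for $v_j$ must set $\textsc{leader}_k(v_j) = j$ at every correct $k$.

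The third step handles the voting in $v_j$: each correct $k$ either has $\textsc{leader}_k(v_j) = j$ from a finalized certificate or retains $\textsc{leader}_k(v_j) = \text{null}$ and falls back to $i\textsc{leader}_k(v_j) = j$; in either case $k$ recognizes $j$ as the leader of $v_j$ and issues its voting message with $l_d = j$. Applying Prerequisite 2 to $v_j$ with correct leader $j$ post-GST, $j$ collects at least $2f+1$ such valid votes for its leadership claim, which is what the lemma asserts.
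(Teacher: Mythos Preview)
Your overall approach matches the paper's: invoke R5 to lift scores above $|\alpha_0|$, show via Algorithms~1--2 that any valid certificate targeting $v_j$ must select $j$ by the argmin rule, cover the no-certificate case with the $i\textsc{leader}$ fallback, and conclude every correct replica determines $j$ so that $j$ collects $2f+1$ votes. Your Step~3 case split is exactly the paper's three cases.

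There is, however, a circularity in your Step~1. You assert that post-GST ``Prerequisite~2 guarantees any correct leader's view finalizes without timeout,'' and hence that R1 is always offset by R3 for correct $j$. But Prerequisite~2 applies only when $j$ is \emph{the} leader recognized by all correct replicas, and that recognition is precisely what the lemma establishes; invoking it to argue score persistence for all $v\geq v_r$ assumes the conclusion. The paper avoids this by not claiming persistence upfront: it uses a single R5 trigger at some $v_1$ post-GST to obtain $S_k[j]\geq 1$ (non-negativity plus $\alpha_4=+1$), argues directly that the \emph{first} $v_j$ whose election view exceeds $v_1$ succeeds (your Steps~2--3 apply verbatim there), notes the resulting net score change for $j$ at correct replicas is zero ($\alpha_0+\alpha_2=0$), and only then extends to all later $v_j$ by induction. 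Reordering your argument this way---R5 base case, then Steps~2--3, then induction on subsequent leadership turns---removes the circularity while keeping your substance intact. (A minor side point: the gap $v_j-v_e$ can be as large as $T_z+2n-1$ by Proposition~1, so your choice $v_c\geq v_r+T_z+n$ should be $v_c\geq v_r+T_z+2n$ to guarantee $v_e>v_r$.)
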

	
	\begin{proof}
		Let \(v_{\text{GST}}\) be the first view after GST, and \(\Theta = \max(300, 10n)\). Define \(v_1\) as the lowest view \(> v_{\text{GST}}\) such that \(v_1 \mod \Theta = 0\). By Rule R5, upon entering \(v_1\), all correct replicas add \(\alpha_4 = +1\) to all replicas' scores. Since scores are non-negative in SWLE, after \(v_1\), every correct replica \(j\) has \(S_k[j] \geq 1\) at all correct replicas \(k\).  
		
		For the first view \(v_j\) with \(\text{i\textsc{leader}}_k(v_j) = j\) such that \(v'_j>v_1\) where \(v'_j\) is the view in which \(v_j\)'s election occurs (let $v_c=\arg\min_v\{v_j | j\in \mathcal{R}\}-1$). 
		Since \(S_k[j] \geq |\alpha_0|=1\) at all correct \(k\), Algorithm \ref{alg:cag} includes \(j\) in \(\text{Cand}_k(v_j)\).  The leader of \(v'_j+1\) can (but may not, e.g. the leader is Byzantine) collect \(\geq 2f+1\) votes. At least \(f+1\) votes must be from correct replicas (since at most $f$ replicas are Byzantine), all containing \(j\) in their candidate sets.  By Algorithm \ref{alg:cs}, \(j\) receives \(\geq f+1\) votes and is selected as leader (since \(v_j\) is the minimal view from \(v_j\) with initial leader \(j\)). The leader certificate \(\text{LCert}_{v_j}\) containing both $j$ and the original $2f+1$ votes should be embedded in the proposal for \(v'_j+1\) to make the proposal valid. For any correct $k$, three cases arise: 
		\begin{itemize}
			\item If \(v'_j+1\)'s proposal finalizes before \(v_j\), then before entering \(v_j\), $k$ sets \(\textsc{leader}_k(v_j) = j\) (Algorithm \ref{alg:lup}) and $k$ determines \(j\) in \(v_j\). 
			\item If a later proposal \(v' > v'_j+1\) finalizes before \(v_j\), according to Lemma \ref{lem:1}, $k$ also has \(\textsc{leader}_k(v_j) = j\) before entering \(v_j\), and $k$ determines \(j\) in \(v_j\). 
			\item Otherwise, when $k$ enters \(v_j\), $\textsc{leader}_k(v_j) = null$. According to SWLE $k$ defaults to \(\text{i\textsc{leader}}_k(v_j) = j\) and determines \(j\) in \(v_j\). 
		\end{itemize}
		In all cases, $j$ can collect $2f+1$ votes for its leadership claim in \(v_j\), after which all correct replicas determine $j$ in \(v_j\). By Theorem \ref{thm:leader-uniqueness}, there does not exist another replica that can gather $2f+1$ votes for its leadership claim. When $j$ serves as the unique leader, by Prerequisite 2, the proposal of view \(v_j\) can be finalized by all correct replicas before timeout. Therefore, $j$'s net score change at correct replicas is 0 (entering view: $\alpha_0 = -1$, proposal finalized: $\alpha_2 = +1$). By induction, this property extends to all subsequent views where $j$ serves as initial leader.
	\end{proof}
	
	\begin{lemma}
		\label{lem:3}
		After GST, there exists a view \(v_c\) such that after \(v_c\), the duration \(T_f\) (in views) for a decision to be reached among all correct replicas is bounded with \(T_f < n\).  
	\end{lemma}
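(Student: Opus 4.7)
The plan is to chain Lemma \ref{lem:2} with the deterministic round-robin structure of initial leaders. First, I would invoke Lemma \ref{lem:2} to obtain the threshold view $v_c$: for every correct replica $j$ and every view $v_j > v_c$ with $\text{i}\textsc{leader}_j(v_j) = j$, the proof of Lemma \ref{lem:2} in fact establishes the stronger statement that \emph{all} correct replicas determine $j$ as the leader of $v_j$ upon entering it, via its three-case analysis (direct finalization of the proposal carrying $\text{LCert}_{v_j}$; inheritance via a later finalized election combined with Lemma \ref{lem:1} and Algorithm \ref{alg:lup}'s gap-filling; fallback to $\text{i}\textsc{leader}_k(v_j) = j$ when $\textsc{leader}_k(v_j)$ remains $\text{null}$). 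Once $j$ is uniformly determined, Theorem \ref{thm:leader-uniqueness} excludes any competing $2f{+}1$-supported leadership claim in $v_j$, and Prerequisite 2 then ensures a decision is finalized inside $v_j$ post-GST.

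Next, I would exploit the fact that $\text{i}\textsc{leader}_j(v) = v \bmod n$ is a deterministic round-robin: in every window of $n$ consecutive views each replica appears as initial leader exactly once, so with $n = 3f+1$ at most $f$ of these views have Byzantine initial leaders and at least $2f+1$ have correct ones. The longest run of consecutive Byzantine-initial-leader views in the cyclic schedule therefore has length at most $f$, with equality when the Byzantine IDs form a contiguous block modulo $n$. Consequently, from any view $v > v_c$, within at most $f+1$ subsequent views one encounters some $v^*$ with a correct initial leader, in which—by the first step—a decision is finalized.

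Combining the two pieces, the decision latency obeys $T_f \le f + 1$, and since $n = 3f + 1$ we have $f+1 < n$ for all $f \ge 1$ (and trivially $T_f = 1 \le n$ when $f = 0$), giving $T_f < n$ as required. The $v_c$ provided by Lemma \ref{lem:2} serves directly as the threshold of this lemma.

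The main obstacle is not the arithmetic but carefully justifying that Lemma \ref{lem:2}'s ``$2f{+}1$ votes'' guarantee actually translates into every correct replica determining $j$ at the moment it enters $v_j$—even when some Byzantine replica has attempted to drive an earlier election for $v_j$. This rests on two observations that need to be made explicit: (i) by Proposition \ref{prop:1} the election of $v_j$ is triggered exactly once (in view $v_j'$), and whatever subset of $2f{+}1$ votes a (possibly Byzantine) leader picks, Algorithm \ref{alg:cs}'s $\arg\min$ must return $j$ because the $\ge f+1$ correct votes place $j$ in $\mathcal{W}$ and $j$ uniquely attains $v' = v_j$; (ii) the determination priority ``$2f{+}1$-backed claim $>$ elected leader $>$ initial leader'' guarantees that even if some prior $\textsc{leader}_k(v_j)$ assignment were in place, $j$'s successful claim in $v_j$ supersedes it. A secondary subtlety is matching the informal $T_f$ of Prerequisite 2's extension with the view-gap reading used here; the proof interprets $T_f$ as an upper bound on the spacing between consecutive decision-bearing views, which is what ``bounded'' in the extended statement captures.
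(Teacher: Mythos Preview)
Your proposal is correct and follows essentially the same route as the paper's own proof: invoke Lemma~\ref{lem:2} for the threshold $v_c$, use the round-robin structure of $\text{i}\textsc{leader}(v)=v\bmod n$ to argue that any $f{+}1$ consecutive views after $v_c$ contain a view with a correct (initial) leader, and appeal to Prerequisite~2 to obtain a decision there, yielding $T_f\le f{+}1<n$. The paper's version is terser and leaves implicit the subtleties you spell out (the three-case determination analysis, the $\arg\min$ behavior in Algorithm~\ref{alg:cs}, and the gap-filling of Algorithm~\ref{alg:lup}); your additional care is justified but does not change the underlying argument.
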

	
	\begin{proof}
		By Lemma \ref{lem:2}, there exists a view $v_c$ such that after $v_c$, any window of \(n\) consecutive views includes at least $2f+1$ correct leaders (since in any $n$ consecutive views, each correct replica serves as initial leader for exactly one view, and there are $\geq 2f+1$ correct replicas). Each correct leader can finalize proposals within bounded time after GST (Prerequisite 2). Even if Byzantine leaders delay progress, within any sequence of $f+1$ consecutive views, there must be at least one correct leader, ensuring progress. Therefore, $T_f$ is bounded and $T_f < n$.
	\end{proof}

	\begin{lemma}
		\label{lem:4}
		For any view \(v\), \(v_{\text{target}} - (v+1) \geq T_z\) (or $v_{\text{target}} - v \geq T_z + 1$), where \(T_z = \lceil T_f / n \rceil \cdot n\). 
	\end{lemma}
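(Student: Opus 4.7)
The plan is to prove $v_{\text{target}} - v \geq T_z + 1$ by a direct case analysis on the three branches of the piecewise definition of $v_{\text{target}}$. In each branch I can read off $v_{\text{target}} - v$ immediately; the first two branches give the bound with room to spare, so the real content lies in showing that the extra term in the ``otherwise'' branch is at least $1$, which I will extract from the negations of the guards for Cases (1) and (2).

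For Case (1), $v_{\text{target}} - v = n + T_z + (\lfloor v/n\rfloor \bmod n) \geq n + T_z$, and since $n \geq 3f+1 \geq 2$ this is at least $T_z + 1$. For Case (2), $v_{\text{target}} - v = n + T_z$ gives the same conclusion. For the ``otherwise'' branch, $v_{\text{target}} - v = T_z + (\lfloor (v-1)/n\rfloor \bmod n)$, so it suffices to show $\lfloor (v-1)/n\rfloor \bmod n \geq 1$. I will write $v = nx + r$ with $r \in \{1, \dots, n\}$, so that $\lfloor (v-1)/n\rfloor = x$, and split on whether $r < n$ or $r = n$. If $r < n$ then $\lfloor v/n\rfloor = x$ and the failure of Case (1)'s guard yields $r + (x \bmod n) > n$; combined with $r \leq n-1$ this forces $x \bmod n \geq 1$. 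If $r = n$ then $v$ is a multiple of $n$, $\lfloor v/n\rfloor = x+1$, and the negations of the two guards give $(x+1) \bmod n \notin \{0, 1\}$, hence $(x+1) \bmod n \geq 2$, which again implies $x \bmod n \geq 1$. Assembling the three branches finishes the proof.

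The main obstacle is the $r = n$ sub-case in the ``otherwise'' branch: there $\lfloor v/n\rfloor$ and $\lfloor (v-1)/n\rfloor$ disagree, and one must translate the failure of both Case (1) and Case (2) into a lower bound on $x \bmod n$ rather than on $(x+1) \bmod n$. Once this translation is handled cleanly, the remaining arithmetic is routine, and the bound follows from the crude estimate $n \geq 2$.
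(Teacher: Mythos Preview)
Your proposal is correct and follows essentially the same three-branch case analysis as the paper. Your treatment of the ``otherwise'' case via the explicit sub-split $r<n$ versus $r=n$ is more careful than the paper's compressed argument (which handles the same point in one sweep), but the structure and underlying idea are the same.
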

	
	\begin{proof}
		We analyze the target view computation based on the three cases in the SWLE design:
		\begin{itemize}
			\item When $v + (\lfloor v/n \rfloor \bmod n) \leq \lceil v/n \rceil \cdot n$:
			$$v_{\text{target}} - v = T_z + n + (\lfloor v/n \rfloor \bmod n) \geq T_z + n \geq T_z + 1$$
			since $n \geq 1$.
			\item Else when \(v/n \in \mathbb{Z}\) $\land$ $(v/n) \bmod n = 1$:
			$$v_{\text{target}} - v = T_z + n \geq T_z + 1$$
			\item Otherwise:$$v_{\text{target}} - v = T_z + (\lfloor (v-1)/n \rfloor \bmod n)$$
			
			Since $v + (\lfloor v/n \rfloor \bmod n) > \lceil v/n \rceil \cdot n > v$, we have $(\lfloor v/n \rfloor \bmod n) \geq 1$. Given that $(v/n) \bmod n \neq 1$, we have $(v/n) > \lfloor v/n \rfloor \geq 1$, and since both $v$ and $n$ are positive integers, $(v-1)/n \geq \lfloor v/n \rfloor \geq 1$, i.e.,  $(\lfloor (v-1)/n \rfloor \bmod n)\ge 1$
			$$v_{\text{target}} - v \geq T_z + 1$$. Therefore:
		\end{itemize}
		In all cases, $v_{\text{target}} - v \geq T_z + 1$.
	\end{proof}
	
	\begin{theorem}[Timely Finalization]
		\label{thm:timely-finalization}
		Under $f$ faults out of $3f+1$ replicas, after GST, there exists a view $v_c$ such that for any view $v > v_c$, every correct replica $j$ finalizes $\textsc{leader}_j(v)$ and i$\textsc{leader}_j(v)$ before or upon entering $v$.
	\end{theorem}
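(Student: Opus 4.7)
The plan is to split the argument into its two constituent fields. The i$\textsc{leader}$ part is essentially definitional: the sliding-window $\textsc{LeaderList}$ is extended by one fresh slot ($v_{\text{new}} = v + T_z + 2n$) whenever a view completes, so for any $v > T_z + 2n$ the slot for $v$ is created when view $v - T_z - 2n$ completes, and at that instant i$\textsc{leader}_j(v)$ is written as $v \bmod n$ and never modified afterwards; the range $v \leq T_z + 2n$ is handled at initialization. Either way, i$\textsc{leader}_j(v)$ is finalized strictly before $j$ enters $v$, uniformly across correct replicas and without any synchrony assumption, so this part does not require the $v > v_c$ condition at all.

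For $\textsc{leader}_j(v)$ I would first observe that by the design clause ``when entering $v$, $\textsc{leader}_j(v)$ and i$\textsc{leader}_j(v)$ are no longer modified'', the field is frozen the moment $v$ is entered, so what needs to be established is that for $v > v_c$ after GST the field already holds a well-defined value at that moment. I would take $v_c$ as the maximum of the thresholds from Lemma \ref{lem:2} and Lemma \ref{lem:3} and large enough to lie after GST. Lemma \ref{lem:4} gives the election for $v$ a lead time of at least $T_z \geq T_f$ views; within that lead window the round-robin assignment of initial leaders combined with $f < n/3$ guarantees at least one correct initial leader, which by Lemma \ref{lem:2} gathers $2f+1$ votes for its leadership claim, and by Lemma \ref{lem:3} together with Prerequisite~2 drives its view to finalization within $T_f < n$ views. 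The resulting finalized proposal carries some $\text{LCert}_{v_{\text{target}}}$ with $v_{\text{target}} \geq v$, and by Algorithm \ref{alg:lup} this populates $\textsc{leader}_j(v)$ before $v$ is entered.

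The main obstacle is the interaction of Proposition \ref{prop:1}---which makes the election targeted at $v$ happen at exactly one trigger view $v'$---with the possibility that the leader of $v'+1$ is Byzantine and simply omits $\text{LCert}_v$; in that scenario the reputation-based election for $v$ itself is irrecoverably lost. The resolution is the gap-filling step (lines 4--6 of Algorithm \ref{alg:lup}): any later finalized proposal whose own certified target view satisfies $v_{\text{target}} \geq v$ still writes $\textsc{leader}_j(v) = $ i$\textsc{leader}_j(v)$ before $v$ is entered, so the Timely Finalization requirement survives even when the reputation-based outcome for $v$ does not. The $T_z$-sized cushion together with the per-window guarantee of a correct initial leader from Lemma \ref{lem:2} supplies that later finalization, and in the residual corner case where no later proposal arrives in time the specification's fallback to i$\textsc{leader}_j(v)$ upon entry already satisfies ``finalized upon entering'', so the theorem holds in every branch.
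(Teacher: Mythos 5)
Your skeleton (Lemmas \ref{lem:2}--\ref{lem:4} plus the gap-filling of Algorithm \ref{alg:lup}) is the right one, and the i$\textsc{leader}$ half is handled correctly, but the $\textsc{leader}_j(v)$ half has a genuine gap in exactly the place where the work has to be done. First, your claim that ``the resulting finalized proposal carries some $\text{LCert}_{v_{\text{target}}}$ with $v_{\text{target}} \geq v$'' is not justified: a proposal finalized early in the lead window, say in view $w$ close to $v_{elec}+1$, only has target $\geq w + T_z + 1$ (Lemma \ref{lem:4}), and since $v$ can exceed $v_{elec} + T_z + 2$ (the offsets in the $v_{\text{target}}$ formula can push $v$ up to roughly $v_{elec}+T_z+2n$), such a proposal's target may fall short of $v$, in which case gap-filling stops below $v$ and does not populate $\textsc{leader}_j(v)$. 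Second, and more seriously, your escape hatch for the ``residual corner case''---that the runtime fallback to i$\textsc{leader}_j(v)$ upon entering $v$ ``already satisfies finalized upon entering''---misreads the theorem: the fallback determination does not write the $\textsc{leader}_j(v)$ field, and if it counted as finalization the theorem would hold trivially for every view with no GST, no $v_c$, and no lemmas, which is plainly not what is being asserted. So your proof, as written, either proves a vacuous statement or leaves the hard case open.

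The paper closes precisely this case by contradiction, and the missing quantitative step is the comparison $T_f \leq T_z$. Suppose a correct $j$ enters some $v > v_c$ with $\textsc{leader}_j(v)$ still null. By Lemma \ref{lem:3}, after $v_c$ at least one proposal is finalized by all correct replicas within any $T_f < n$ consecutive views, so some proposal is finalized in a view $w \in \{v - T_f, \dots, v-1\}$, and $w > v_{elec}$ because Lemma \ref{lem:4} gives $v - (v_{elec}+1) \geq T_z \geq T_f$. Applying Lemma \ref{lem:4} to \emph{that} view $w$ gives its target $\geq w + T_z + 1 \geq v - T_f + T_z + 1 \geq v + 1$, so by Lemma \ref{lem:1} the gap-filling triggered by that finalization sets $\textsc{leader}_j(v) \neq \text{null}$ before $j$ enters $v$---contradiction. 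In other words, the ``residual corner case'' cannot occur after $v_c$; it must be excluded by this arithmetic rather than absorbed by reinterpreting the fallback. Replace your last paragraph with this argument and the proof matches the paper's.
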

	
	\begin{proof}
		In SWLE, it is evident that, once $\textsc{leader}_j(v)$ is finalized, i$\textsc{leader}_j(v)$ must already have been finalized. Therefore, we only need to consider $\textsc{leader}_j(v)$.
		
		By Lemma \ref{lem:3}, after GST, there exists $v_c$ such that after $v_c$, the decision latency $T_f$ (in views) is bounded with $T_f < n$. Assume for contradiction that after $v_c$, a correct replica $j$ enters a view $v' > v_c$ without finalizing $\textsc{leader}_j(v')$. By Proposition \ref{prop:1},  $v'$'s leader election must be initiated in a prior view $v_{elec}<v'$. By Lemma \ref{lem:1}, this implies no proposal is finalized for views between $v_{elec}+1$ to $v'$.
		
		By Lemma \ref{lem:4}:
		The gap between $v_{elec}+1$ and $v'$ satisfies $v' - (v_{elec}+1) \geq T_z $, where $T_z = \lceil T_f / n \rceil \cdot n$.
		However, by Lemma \ref{lem:3}, within any $T_f$ consecutive views after $v_c$, at least one proposal is finalized by all correct replicas. Since $T_f \leq T_z$, the maximum interval between finalized proposals is $T_f \leq T_z$. This contradicts $v' - (v_{elec}+1) \geq T_z $.
	\end{proof}
	
	\begin{theorem}[$\gamma$-Guarantee]
		\label{thm:gamma-guarantee}
		After GST, there exists a view $v_c$ such that SWLE satisfies the $\gamma$-Guarantee with parameters $(v_c, T = n, \gamma = \frac{2f+1}{n}, \sup = n -  (n/\Theta\cdot f)(1+T_z/n)$).
	\end{theorem}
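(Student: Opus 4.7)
The plan is to split the argument into a deterministic lower bound (leveraging Lemma \ref{lem:2} and Theorem \ref{thm:leader-uniqueness}) and an upper bound (using the R5 period $\Theta$ together with the $T_z$-gap from Lemma \ref{lem:4}). For the lower bound, I would take $v_c$ large enough that Lemma \ref{lem:2} applies to every correct replica. Over any window of $T = n$ consecutive views starting from $v > v_c$, the round-robin assignment of $\text{i\textsc{leader}}$ makes each replica the initial leader of exactly one view in the window. For each of the $\ge 2f+1$ correct replicas $j$, its unique such view $v_j$ satisfies condition (1) because, by Lemma \ref{lem:2}, $j$ collects $2f+1$ votes for its leadership claim, so by Theorem \ref{thm:leader-uniqueness} every correct replica determines $j$ in $v_j$, and condition (2) because $j$ is correct. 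Summing over correct replicas yields $E[C] \ge 2f+1 = \gamma T$ deterministically, so the lower bound actually holds with probability one.

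For the upper bound, I would count the expected number of \emph{contaminated} views in a window---views whose determined leader is Byzantine---and subtract from $n$. The key observation is that by Rule R1 combined with the eligibility check $S_j[k] \ge |\alpha_0|$ in Algorithm \ref{alg:cag}, once a Byzantine replica $k$ has been elected and failed its round it is excluded from every correct replica's candidate array; only R5 can restore $k$, and R5 fires once per $\Theta$ views. A window of $n$ views therefore sees $n/\Theta$ firings in expectation; each firing can re-enable all $f$ Byzantine replicas, and by Lemma \ref{lem:4} each such restoration can contaminate at most $1 + T_z/n$ leader slots in the current window (the slot immediately induced by the restored eligibility, plus the $T_z/n$ further slots whose elections are triggered in the interval before the next R5 firing). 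Summing across replicas and firings gives $(n/\Theta)\,f\,(1 + T_z/n)$ contaminated views in expectation, and hence $E[C] \le n - (n/\Theta)\,f\,(1 + T_z/n) = \sup$.

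The hard part will be the upper-bound accounting, because it requires a delicate interleaving of three effects: (i) the timing of R5 normalizations, (ii) how Algorithm \ref{alg:cag} uses the normalized scores at election-trigger time, and (iii) the offset $T_z$ between an election's trigger view and its target view. I would argue that this sup is attained in the benign limit envisioned by the specification---where all Byzantine faults behave as crash faults, so that no disruption beyond the unavoidable R5-triggered restorations occurs---and then verify the algebra with the concrete values $T_z = \lceil T_f/n \rceil \cdot n$ and $\Theta = \max(300, 10n)$ from Table \ref{tab:scoring_rules}. The lower-bound portion, by contrast, should follow routinely from the infrastructure already established in Lemmas \ref{lem:2}--\ref{lem:3}.
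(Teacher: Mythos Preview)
Your lower-bound argument is essentially the paper's: invoke Lemma~\ref{lem:2}, use the round-robin structure of $\text{i\textsc{leader}}$ to place each of the $\ge 2f+1$ correct replicas as initial leader of exactly one view per $n$-window, and conclude $E[C]\ge 2f+1=\gamma T$. The paper cites Theorem~\ref{thm:timely-finalization} rather than Theorem~\ref{thm:leader-uniqueness} for the agreement clause, but either works given what Lemma~\ref{lem:2} already delivers.

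Your upper-bound argument, however, has the inequality running the wrong way. You argue that once a Byzantine replica fails it is excluded until R5 fires, that each firing re-enables it, and that each restoration ``can contaminate \emph{at most} $1+T_z/n$ leader slots''; you then conclude $E[C]\le n-(n/\Theta)f(1+T_z/n)$. But an upper bound on contamination yields a \emph{lower} bound on $E[C]$, not an upper one. To establish $E[C]\le\sup$ you must show that Byzantine replicas are elected \emph{at least} $(n/\Theta)f(1+T_z/n)$ times per window regardless of their strategy---i.e., that the R5 restorations are unavoidable. That is exactly what the paper does: after each R5 firing a Byzantine replica $m$ has $S_k[m]\ge|\alpha_0|$ at every correct $k$, so $m$ appears in all correct candidate arrays and is \emph{certain} to be selected by the $\arg\min$ rule of Algorithm~\ref{alg:cs} for the first target view $v_m$ with $\text{i\textsc{leader}}(v_m)=m$; Lemma~\ref{lem:4} then guarantees at least $T_z/n$ further wins before $m$'s score can drop below threshold. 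Hence Byzantine replicas lead at least $(1+T_z/n)$ times each per $\Theta$ views, giving $E[C]\le\sup$. Your closing phrase ``unavoidable R5-triggered restorations'' shows you have the right intuition; the fix is to recast the accounting as a lower bound on Byzantine leadership rather than an upper bound on contamination, and to note that your crash-fault limit witnesses attainment of $\sup$ rather than proving the inequality.
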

	
	\begin{proof}
		By Lemma \ref{lem:2}, after \(v_c\), for any view \(v_j > v_c\) with \(\text{i\textsc{leader}}(v_j) = j\) (replica \(j\) is correct), \(j\) gathers \(2f+1\) votes for its leadership cliam in \(v_j\). All correct replicas agree on (detemine) \(j\) in $v_j $and \(j\) maintain zero net score change (score decreases by $\alpha_0 = -1$ upon view entry and increases by $\alpha_2 = +1$ upon proposal finalization).
		
		After $v_c$, we analyze the behavior of Byzantine replicas when they serve as leaders:
		\begin{itemize}[leftmargin=13pt]
			\item Correct behavior (similar to correct replicas): Net score change is $0$, maintaining reputation score $\ge |\alpha_0|=1$ and leadership eligibility at all correct replicas .
			\item Misbehavior (e.g., timeout or invalid proposals): Score decreases by $\alpha_0+ \alpha_1 = -(n+1)$ (Rule R1 and R2), potentially reaching $<|\alpha_0|=1$ (losing leadership eligibility) at correct replicas.
		\end{itemize}

		In any sequence of \(T = n\)-view window (corresponding to $n$ consecutive views), the initial leader assignment covers all $n$ replicas exactly once due to i$\textsc{leader}(v) = v \bmod n$. Therefore, $T = n$. By Lemma \ref{lem:2} and Theorem \ref{thm:timely-finalization}, for any view $v'>v_c$ where the initial leader is a correct replica $j$, all correct replicas agree on (determine) the same leader $j$ before or upon entering $v'$ (i.e., $\textsc{leader}_k(v') = j$ for all correct replicas $k$). Since there are $2f+1$ correct replicas, the lower bound for the expected number of views with unified correct leadership is $\gamma T = \frac{2f+1}{n} \cdot n = 2f+1$ (i.e, \(\gamma = \frac{2f+1}{n}\)).

		For the upper bound $\sup$, we analyze Byzantine replica elections. For any view $v''>v_c$ where the initial leader is a Byzantine replica $m$, it may be elected or defeated (the elected leader of $v''$ may be replaced by a correct replica when defeated). However, at very least, 
		every $\Theta$ views, Rule R5 increases all replicas' scores by $\alpha_4 = 1$. 
		For any Byzantine replica $m$, after this score increases, it is certain to win the election in $v'_m$, when, for the first time, the target view is $v_m$, provided that $m$ is the initial leader for $v_m$, because:
		
		\begin{enumerate}
			\item No rule decreases its score before it serves as leader.
			\item \(S_k[m] \geq 1\) at all correct \(k\), so \(m \in \text{Cand}_k(v_m)\).
			\item $v_m$ is the minimum view number from $v_m$.
		\end{enumerate}
		Similar to the analysis of correct replicas after $v_c$, all correct $k$ agree on $m$ as $\textsc{leader}_k(v_m)$ before or upon entering $v_m$. 
		Similarly, $m$ is certain to win elections initiated in views after $v'_m$ (where $m$ serves as the initial leader of the target views) for at least $\min\lfloor (v_m-(v'_m+1))/n \rfloor = \min\lfloor (v_{target}-(v+1))/n \rfloor = T_z/n$ (Lemma \ref{lem:4}) times. Thus, each Byzantine replica leads at least $(1+T_z/n)$ times every \(\Theta\) views. 
		In \(T = n\)-view window,  for $f$ Byzantine replicas, they lead at least $(n/\Theta\cdot f)(1+T_z/n)$ views in expectation. Hence, the upper bound is:
		$\sup = n -  (n/\Theta\cdot f)(1+T_z/n)$.
	\end{proof}

\noindent \textbf{Complexity Analysis.} SWLE introduces minimal overhead.\\
(1) Each replica independently maintains its local reputation matrix $S$ and $\textsc{leaderlist}$ without inter-replica communication. The leader determination, scoring and candidate selection processes (all simple computations, though the overall overhead scales with replica count) are performed locally introducing no extra Communication Complexity. \\
(2) Each replica $j$ maintains a constant-size \(S_j \in \mathbb{R}^{n \times 1}\) and a bounded \(\textsc{leaderlist}\) of size $T_z + 2n$ - both sizes are fixed and protocol-independent (extra Space Complexity).\\
(3) Voting messages carry two extra fields: \(l_d\) (leader ID, size \(O(1)\)), \(\text{Cand}\) (candidate array, size \(O(n)\)), while proposals carry $\text{LCert}$. For protocols already include voting proofs or certificates from previous views, $\text{LCert}$ can be directly integrated with only $l$ as extra overhead (Communication Overhead).
\vspace{1mm}

\noindent \textbf{Impact of Byzantine Replicas.} \(S\) and \(\textsc{leaderlist}\) are locally maintained, Byzantine replicas cannot directly interfere with correct replicas' scoring or leader determinations. While Byzantine leaders may selectively include votes when forming $\text{LCert}$, valid certificates require at least $f+1$ votes from correct replicas, and this only affects their own $S$ regarding scoring (but they already can arbitrarily manipulate their own scores regardless of rules). Pre-GST, Byzantine replicas may target certain correct replicas through message delays, reducing their scores to $<|\alpha_0|$ at others to eliminate their leadership eligibility (but liveness cannot be originally guaranteed pre-GST, i.e., they can still achieve  similar effect without SWLE). However, SWLE's design ensures rapid recovery:  after GST, there exists $v_c$ beyond which correct replicas' worst-case election probability never falls below conventional mechanisms (a nearly optimal lower bound for reputation-based approaches since Byzantine replicas can pretend to be correct, as discussed in Section \ref{subsec:prospe}) regardless of Byzantine behavior, while achieving significantly higher upper bounds.
\vspace{1mm}

\noindent \textbf{Cost of Byzantine Behavior.} In conventional mechanisms, Byzantine replicas simply wait for their turn to become leaders and disrupt consensus progress without contributing positively. SWLE penalizes leadership misbehavior by significant score reductions ($\alpha_1$$ = $$-n$). Once a Byzantine replica loses eligibility (score$<$$|\alpha_0|$$=$$1$), its position may be replaced by correct ones. Regaining eligibility\footnote[8]{Byzantine replicas may also achieve leadership via certain uneconomical strategies without first regaining eligibility at correct replicas (after GST). 
For example, malicious replicas may attempt to regain leadership eligibility via collusion. Suppose Byzantine replica $p$ loses eligibility at correct replicas. For a target view $v_{\text{target}}$ where i$\textsc{leader}(v_{\text{target}}) = p$ ($v_{\text{target}}$ is computed from $v$) and view $v+1$ with replica $q$ as the leader, since correct replicas will not include $p$ in their candidate arrays in votes for $v$, $q$ cannot collect $2f+1$ valid votes where $f+1$ contain $p$ as a candidate. If $q$ is also Byzantine, after GST, for $p$ to become $v_{\text{target}}$'s leader recognized by correct replicas, $q$ could deliberately cause its proposal in $v+1$ to fail (e.g., by proposing an invalid proposal or timing out). This prevents finalizing a leader certificate for $v_{\text{target}}$, forcing correct replicas to fallback to i$\textsc{leader}(v_{\text{target}}) = p$. However, this strategy forces $q$ to incur a significant penalty ($\alpha_1 = -n$ score reduction via rule R2) at correct replicas to grant $p$ a single leadership view—a net loss for the adversary (sacrificing $\geq n$ reputation points but essentially just trading leadership opportunities between Byzantine replicas). Such collusion is therefore generally uneconomical and not considered in our primary analysis.
} 
at correct replicas can only through: (a) having its valid proposals finalized (R3), (b) contributing as one of the first $2f+1$ valid votes in successful consensus rounds led by correct replicas (R4), or (c) waiting for R5 be triggered. Since (a) becomes unavailable after losing eligibility, Byzantine replicas must rely on (c) which has a very long cycle ($\ge$$10n$ views), or (b) which provides only ($1/n$)-point increases at 1 correct replica per view thus requiring sustained correct behavior across multiple views even with non-negative score constraint. This creates a trade-off for attackers (mainly after GST): they must contribute positively for a period (or wait a considerable period) before gaining the opportunity to disrupt it - the overall performance improves compared to scenarios where they can disrupt without such constraints.

\begin{figure*}[h!]
	\centering
	\subfigure[\scriptsize Instantaneous throughput over first 2000 views under 1 Byzantine fault (Case 1).]{
		\label{fig:case11}
		\includegraphics[width=0.485\columnwidth]{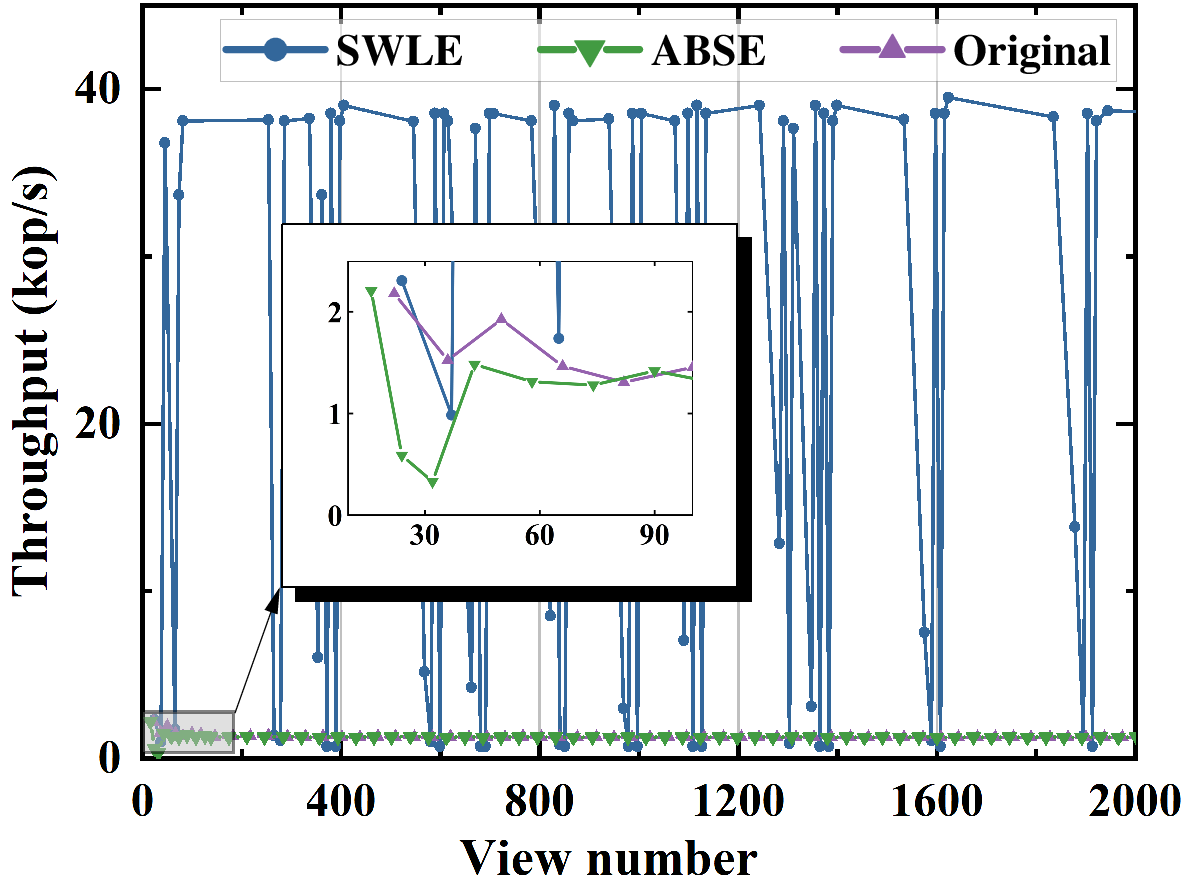}}
	\subfigure[\scriptsize Instantaneous latency over first 2000 views under 1 Byzantine fault (Case 1).]{
		\label{fig:case12}
		\includegraphics[width=0.485\columnwidth]{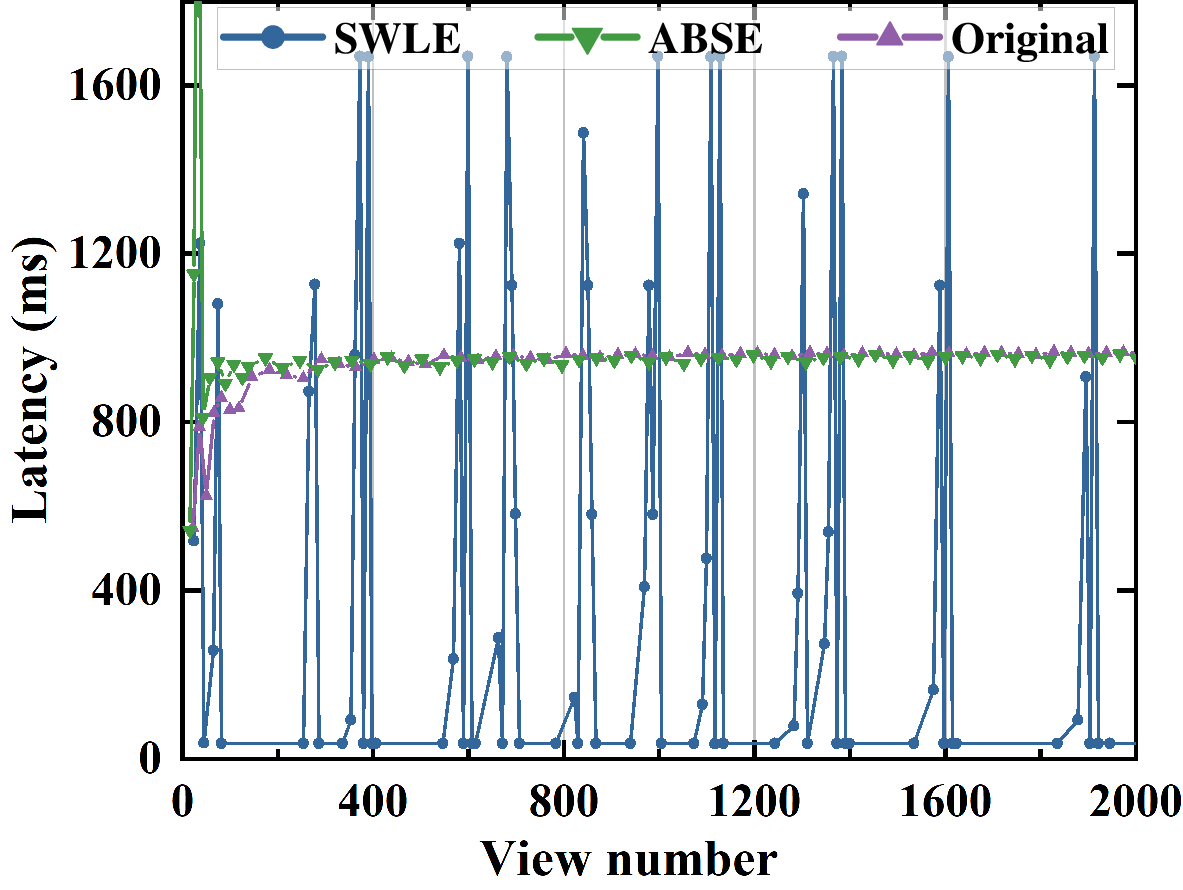}}
	\subfigure[\scriptsize Instantaneous throughput over first 2000 views under 3 Byzantine faults (Case 2).]{
		\label{fig:case21}
		\includegraphics[width=0.485\columnwidth]{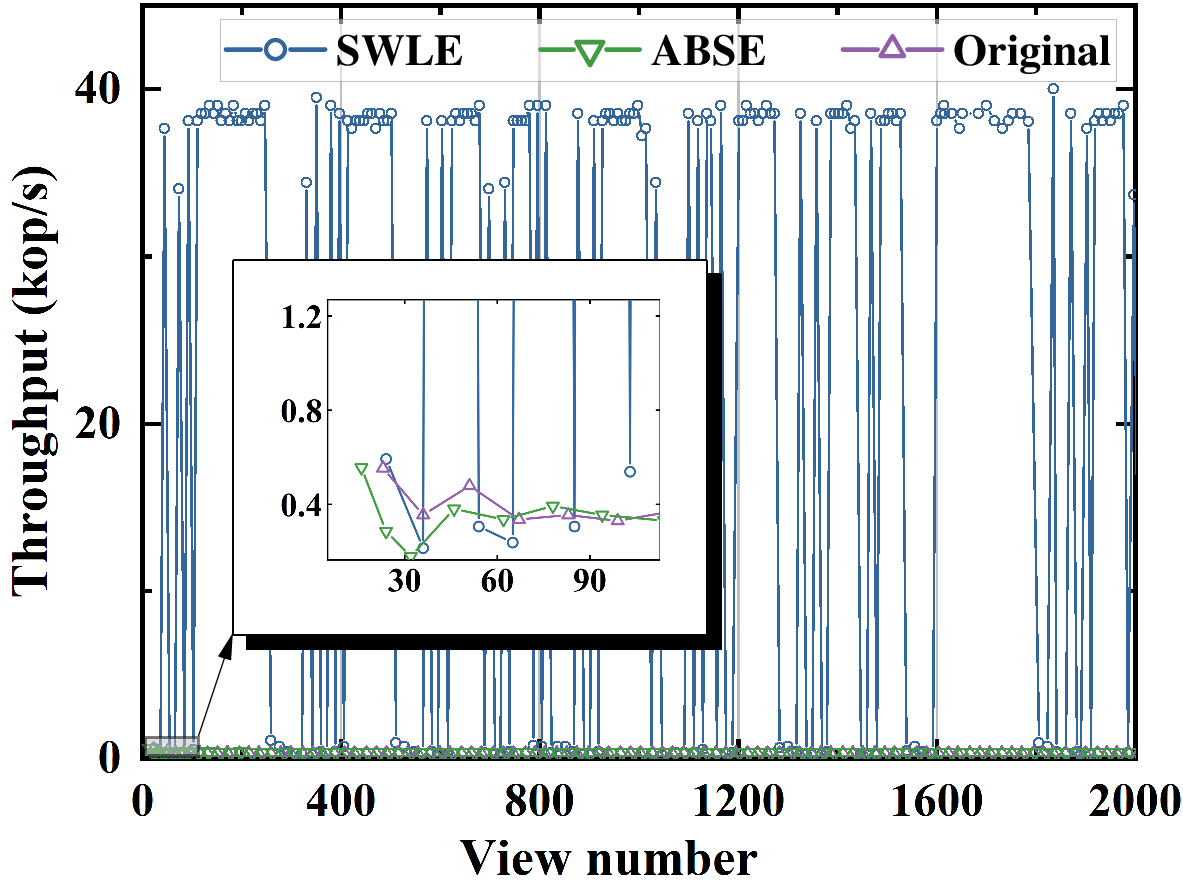}}
	\subfigure[\scriptsize Instantaneous latency over first 2000 views under 3 Byzantine faults (Case 2).]{
		\label{fig:case22}
		\includegraphics[width=0.485\columnwidth]{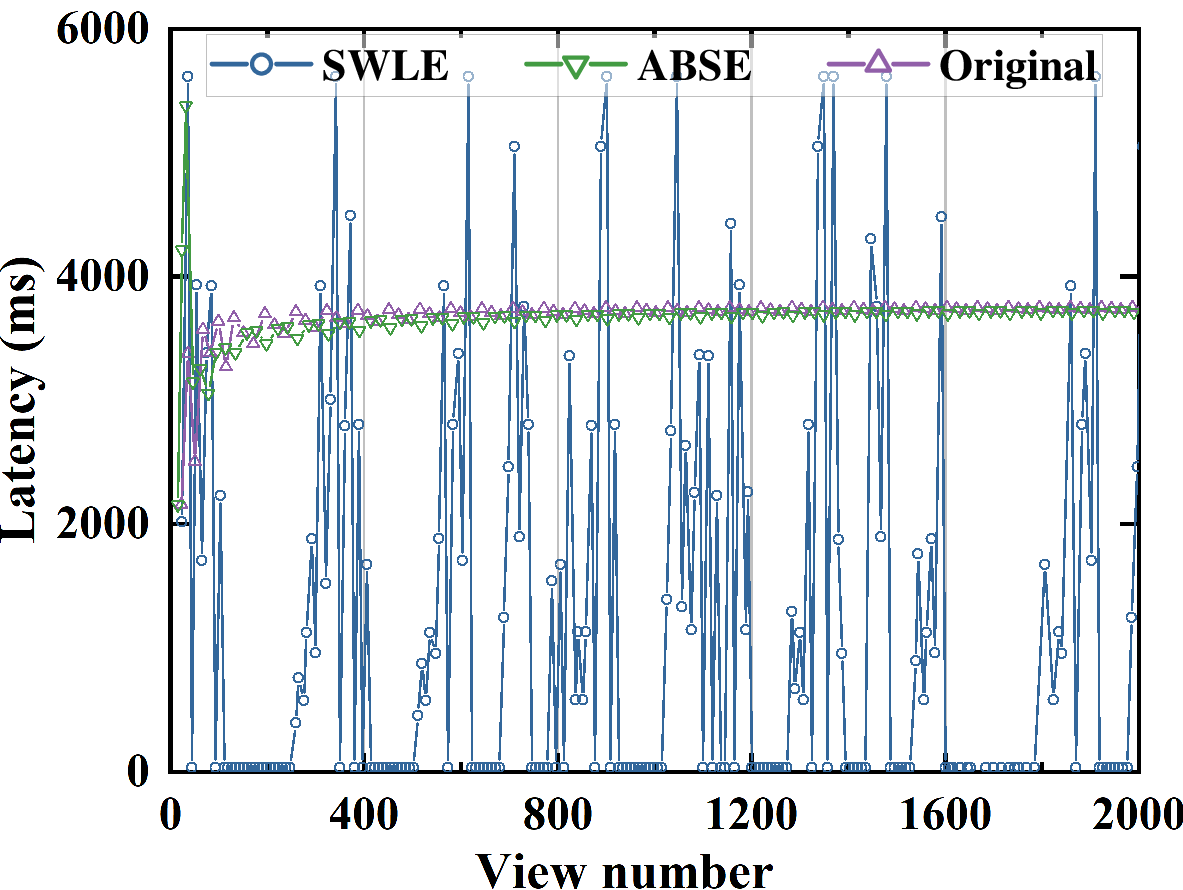}}
		\vspace{-0.05in}
	\subfigure[\scriptsize Instantaneous throughput over first 2000 views under 3 crash faults (Case 3).]{
		\label{fig:case31}
		\includegraphics[width=0.495\columnwidth]{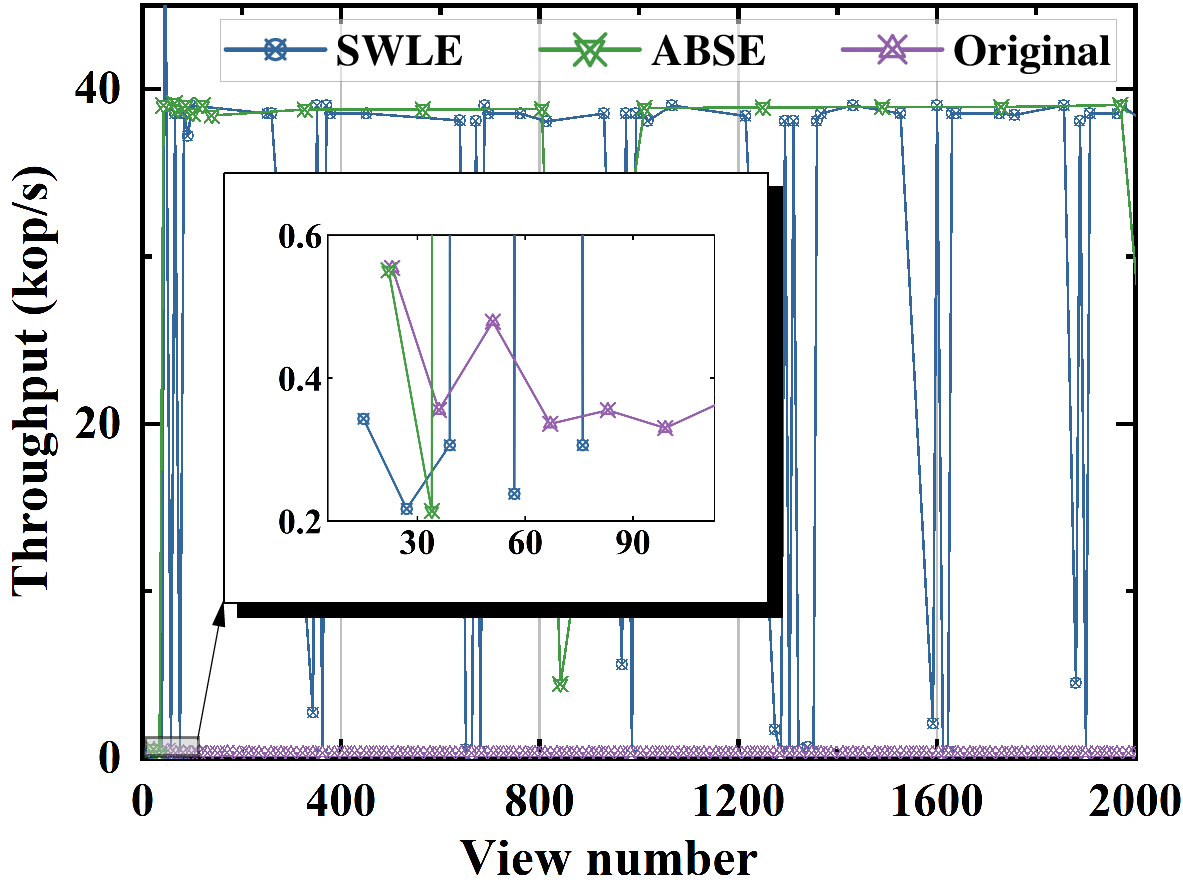}}
	\subfigure[\scriptsize Instantaneous latency over first 2000 views under 3 crash faults (Case 3).]{
		\label{fig:case32}
		\includegraphics[width=0.495\columnwidth]{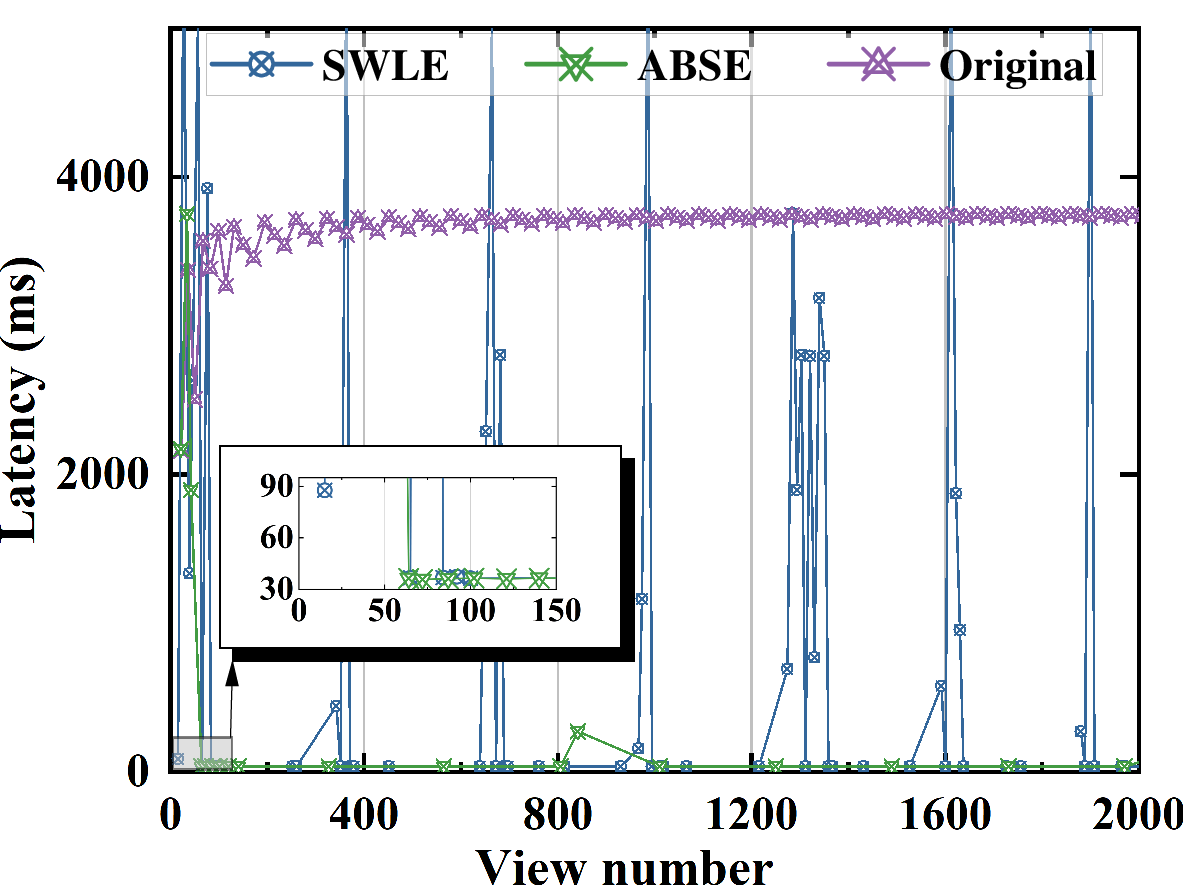}}
	\subfigure[\scriptsize Average throughput over the first 2000 views for the three cases.]{
		\label{fig:ath}
		\includegraphics[width=0.475\columnwidth]{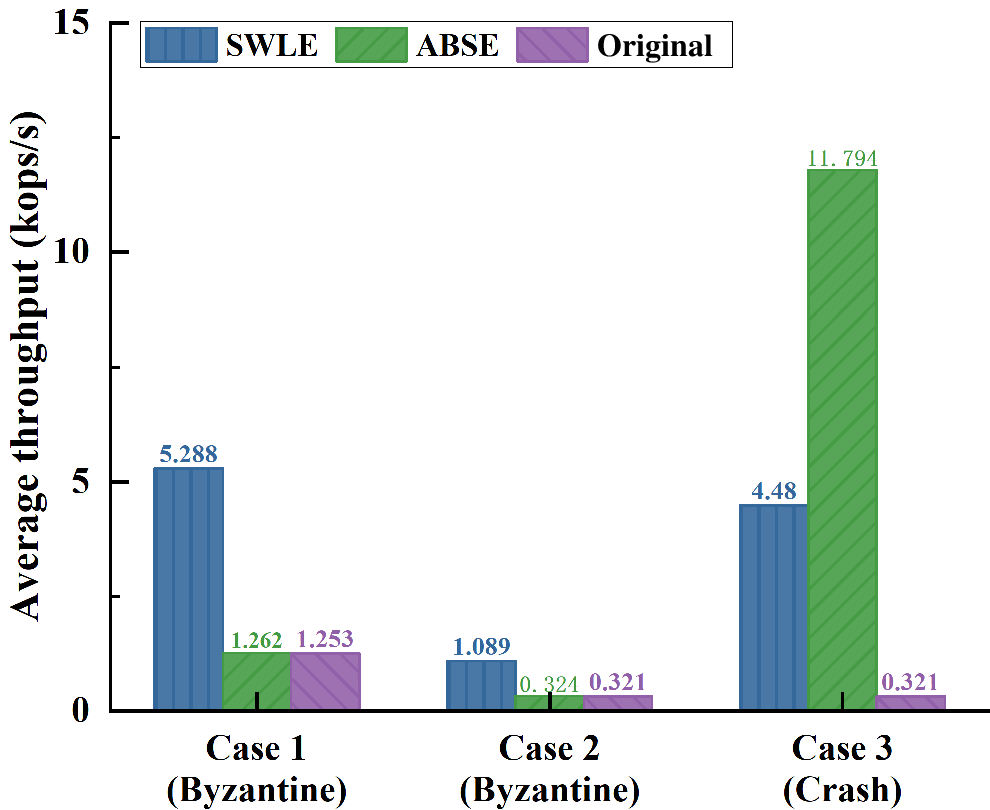}}
	\subfigure[\scriptsize Average latency over the first 2000 views for the three cases.]{
		\label{fig:al}
		\includegraphics[width=0.475\columnwidth]{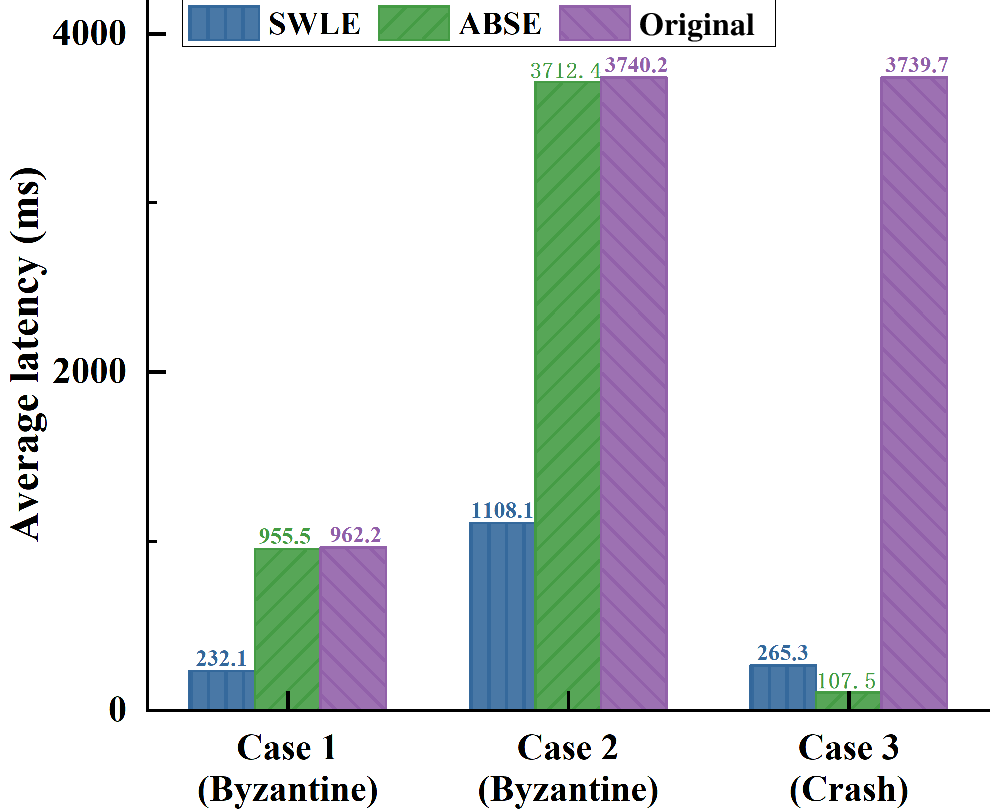}}
		\vspace{-0.03in}
	\subfigure[\scriptsize Percentage of views with faulty leaders (or timeout) over the first 2000 views.]{
		\label{fig:pmalic}
		\includegraphics[width=0.465\columnwidth]{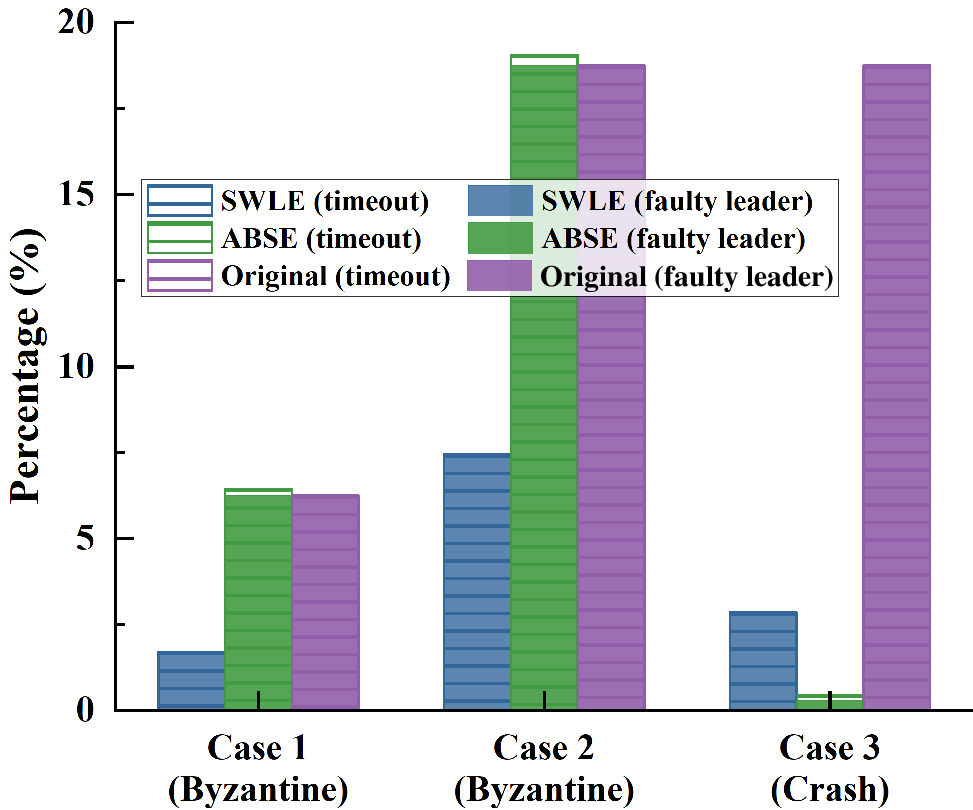}}
	\subfigure[\scriptsize Throughput-latency curves under different payload and batch sizes. 4 replicas in total (fault-free).]{
		\label{fig:benchmark}
		\includegraphics[width=\columnwidth]{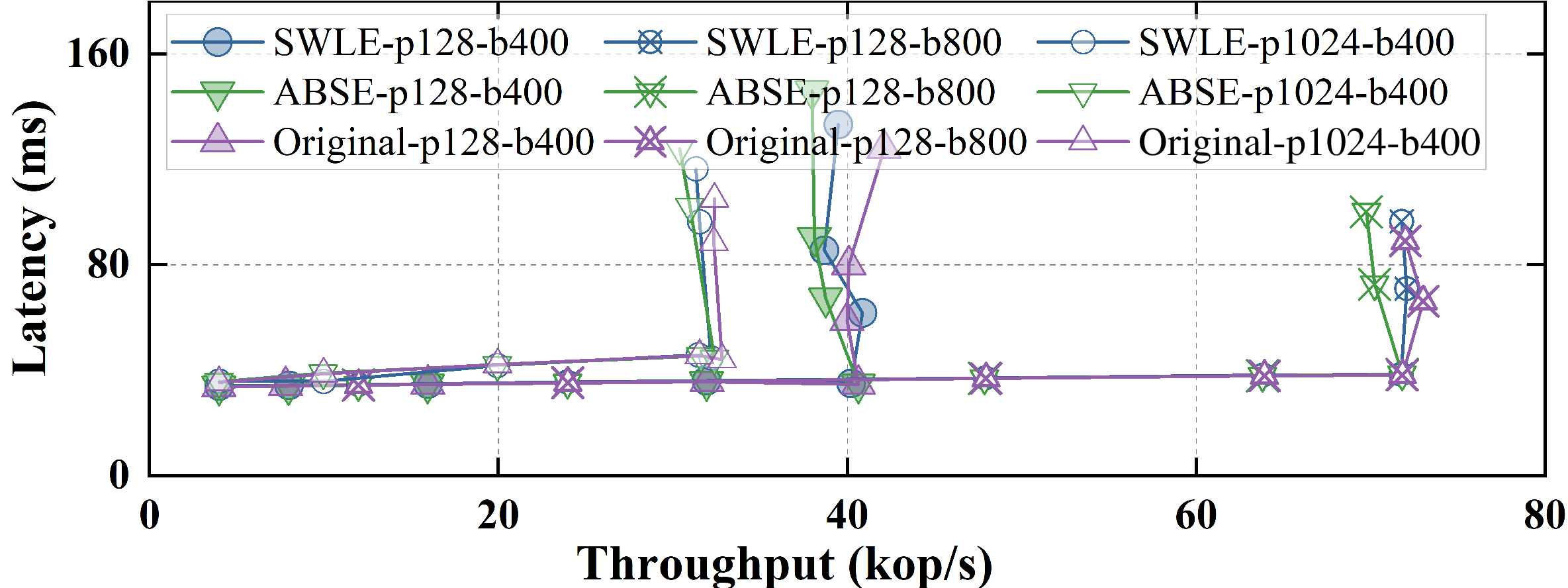}}
	\subfigure[\scriptsize Average throughput as the number of replicas increases (fault-free).]{
		\label{fig:scal}
	\includegraphics[width=0.495\columnwidth]{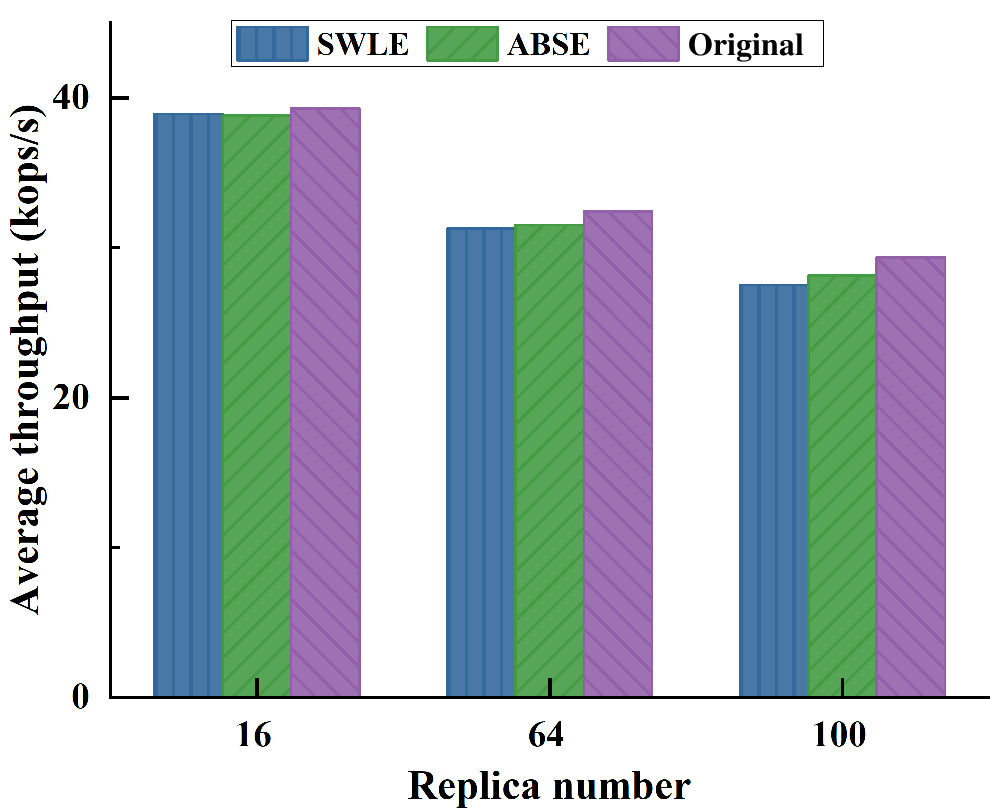}}
	\label{fig:performance}
	\caption{Performance.}
	\vspace{-0.13in}
\end{figure*}

\section{Evaluation}

\noindent \textbf{Overview.} We implemented SWLE as a protocol-independent Rust module containing all core functionalities and data structures (approximately 400 lines of code). The implementation relies on an extra cryptographic module (built primarily using ed25519\_dalek) alongside standard library dependencies. The two modules are open-sourced on GitHub\footnote[9]{https://github.com/BerserkRugal/SWLE} to ensure reproducibility of our experimental conclusions. 

For evaluation, we integrated SWLE into HotStuff protocol \cite{yin2019hotstuff} (non-chained version), requiring $\sim$300 lines of code modifications. We compare SWLE against ABSE, a state-of-the-art reputation-based leader election optimization framework (using the default open-source module provided in the original work \cite{liu2025abse}), and HotStuff's default round-robin leader election mechanism. For brevity, we refer to these three approaches as "SWLE", "ABSE", and "Original" respectively in subsequent figures. Our experiments demonstrate that SWLE: (1) imposes higher misbehavior costs on Byzantine replicas to maintain their leadership eligibility and provides superior performance under common Byzantine fault scenarios, (2) addresses critical limitations in ABSE where recovery time after GST remains unbounded, and (3) has a minor impact on base protocol performance in fault-free scenarios (lightweight nature).


\noindent \textbf{Setup.} We use throughput (operations finalized per second) and latency (duration from client operation request generation to finalization at replicas) as our main performance metrics. These metrics are categorized as: global average (average values calculated from system startup) and instantaneous metrics (sampled over sliding windows, with periodic reporting). The metric module is also open-sourced in our repository to ensure experimental reproducibility, but note it is not protocol-independent, as it depends on block metadata and configuration parameters (e.g., sliding window size).
 
We conducted experiments on 16 Alibaba Cloud Elastic Compute Service (ECS) u1-c1m2.4xlarge instances distributed across 4 different regions in northern China (Beijing, Zhangjiakou, Hohhot, and Ulanqab). Each machine provides 5Gbps bandwidth, 16 virtual CPUs on a 2.5GHz Intel Xeon Platinum scalable processor, 32GB memory, and runs Ubuntu 22.04 LTS. Replicas were distributed as equally as possible across these instances. Unless otherwise specified, the batch size 400 operations, operation size was 128 bytes, and the timeout was 1.5 seconds for each replica (note timeout configuration would influence observed performance under faults: shorter timeouts reduce degradation by faster leader rotation, while longer timeouts amplify performance gaps between mechanisms).

\noindent \textbf{Performance under Faulty Replicas.} 	We conducted experiments with a total of 16 replicas ($n$$=$$16$), prioritizing deployment of faulty replicas on machines with lower average message delays to maximize their advantage (and minimize their cost). Experiments were conducted under three fault scenarios: 1 Byzantine faulty replica (Case 1); 3 Byzantine faulty replica (Case 2); and 3 crash faulty replicas (Case 3). We note that experiments with 5 Byzantine/crash faulty replicas were also conducted but are omitted, as the performance variance significantly impacted graph readability, and the results showed similar trends to those observed under 1 or 3 faults. 

Byzantine faults strategically build reputation through rapid and valid voting (low-cost given their location advantage) then disrupt consensus (lead maliciously) when they gain leadership. Note in original HotStuff, no reputation mechanism exists, so Byzantine faults simply wait for their turn to become leaders without participating in voting. Crash faults discard all messages and remain unresponsive. Performance metrics were measured over the first 2000 views of system execution.

\vspace{1mm}

\noindent \textit{Case 1:} As shown in Figs. \ref{fig:case11}-\ref{fig:case12}, SWLE maintains throughput at $\sim$38 kop/s and latency at $\sim$37 ms in most views, with Byzantine leaders being elected and disrupting consensus infrequently. In contrast, ABSE and the original mechanism exhibit severe degradation: initial throughput briefly peaks at $\sim$2.2 kop/s ($\sim$600 ms latency) before subsequently stabilizing at $\sim$1.27 kop/s ($\sim$950 ms latency)\footnote[10]{It is observed that ABSE performs slightly better than original mechanism (e.g., throughput $\sim$1\% higher) after stabilization. This occurs because under ABSE, Byzantine replicas contribute valid votes to accumulate reputation scores, but this isn't the case under the original mechanism. Since they respond faster, consensus efficiency in rounds led by correct replicas is slightly higher.}. 
This persistent performance gap—visible across all sampling windows—indicates that Byzantine replicas can achieve stable election frequency, which suggest that both mechanisms provide weaker guarantees in leader election against common Byzantine behaviors compared to SWLE: as reflected in Fig. \ref{fig:pmalic}, SWLE experiences Byzantine leaders in <2\% of views, while the other two mechanisms exceed 6\%. Consequently, SWLE achieves  $\sim$5.3 kop/s average throughput (Fig. \ref{fig:ath})—$\sim$4.2× higher than baselines ($\sim$1.26 kop/s)—and $\sim$232 ms average latency (Fig. \ref{fig:al}), $\sim$75\% lower than $\sim$960 ms in ABSE/Original.

Notably, ABSE suffers an initial performance drop: Byzantine replicas exploit their faster responsiveness to dominate early voting, thereby preempting some correct replicas as consensus-promoting replicas per view. This leads to scenarios where no replicas can accumulate scores above the base reputation score thresholds (baseline) for leadership eligibility (score of Byzantine replicas are insufficient as well due to misleading) at $2f+1$ replicas during certain views in the initial period, resulting in ABSE's conflict scenarios (timeout occurs and roll back to the original HotStuff). This is also evidenced in Fig. \ref{fig:pmalic}, as ABSE's view timeout percentage exceeds the percentage of views led by Byzantine replicas. While recovery occurs quickly here, this exposes a critical limitation of ABSE: if adversaries consistently suppress correct replicas’ reputation accumulation at others pre-GST, recovery time becomes unbounded post-GST since the baseline scores increase with view counts in ABSE. SWLE avoids this by design, providing bounded post-GST recovery (which we have proved in theory).

\vspace{1mm}

\noindent \textit{Case 2 (similar to Case 1):} With increased Byzantine faults (Figs. \ref{fig:case21}–\ref{fig:case22}), ABSE/Original degrade further,  stabilizing at $\sim$0.33 kop/s throughput and $\sim$3.7 s latency. Though affected by Byzantine leaders in more views ($\sim$4×) compared to Case 1 (Fig. \ref{fig:pmalic}), SWLE still maintains relatively fast response for most operations. Its faulty-leader rate $\sim$7.5\% remains $\sim$60\% lower than ABSE/Original and sustains $\sim$1.1 kop/s average throughput—$\sim$3.3× higher than the other two—and $\sim$1.1 s latency (Fig. 1(g)–1(h)), $\sim$70\% lower than $\sim$3.7 s.

\vspace{1mm}

\noindent \textit{Case 3:} 
While SWLE isn’t specifically optimized for crash fault scenarios (Figs. \ref{fig:case31}–\ref{fig:case32}) and therefore does not outperform ABSE, it is still superior to Original: optimum throughput reaches 38 kop/s (similar to ABSE), but SWLE sustains this optimum for fewer views (the original mechanism performs similarly to Case 2). As shown in Fig. \ref{fig:pmalic}, crash replicas become leaders in $\sim$2.8\% of SWLE views—$\sim$9× more than ABSE—resulting in lower average throughput ($\sim$4.5 kop/s vs. ABSE’s $\sim$11.3 kop/s) and higher latency ($\sim$265 ms vs. $\sim$107 ms in Figs. \ref{fig:ath}-\ref{fig:al}).
However, optimizing specifically for crash fault scenarios in BFT systems may inadvertently create vulnerabilities exploitable by Byzantine adversaries.

\vspace{1mm}

\noindent \textbf{Base Performance (Fault-Free Case).} We measured throughput and latency of HotStuff under three mechanisms in a setting commonly used for evaluating other BFT systems \cite{yin2019hotstuff,bessani2014state} to test their performance characteristics (L-graphs). Experiments employed 4 replicas with two variable dimensions: (1) Operation (payload) sizes: 128 ("p128") and 1024 bytes ("p1024"), and (2) Batch sizes: 400 ("b400") and 800 ("b800"). We incrementally increased the client operation request rate until system saturation, maintaining identical rate steps (under the same configuration) across all mechanisms. As shown in Fig. \ref{fig:benchmark}, overall, all three mechanisms exhibit minimal peak throughput differences (e.g., $\sim$72 kop/s at p128-b800 and differences within 1\%). SWLE demonstrates slightly superior stability post-saturation compared to ABSE, though is slightly weaker than Original. 
Varying payload and batch sizes hardly cause changes in performance deviation between SWLE and Original.

\vspace{1mm}

\noindent \textbf{Scalability (Fault-Free Case).} We assess scalability by measuring average throughput at increasing replica counts ($n$=16, 64, 100). At $n$$=$$16$, SWLE achieves $\sim$38.8 kop/s — comparable to ABSE but slightly lower than Original. At $n=64$, SWLE begins to underperform compared to ABSE, achieving $\sim$31.3 kop/s — $\sim$1\% lower than ABSE and $\sim$4\% lower than Original. While these throughput gaps expand to $\sim$2.5\% and $\sim$7\% respectively when $n=100$. 
The widening gap stems from SWLE’s per-view overhead in candidate generation and leader determination across all replicas,
making the mechanism somewhat sensitive to increases in replica numbers.
This highlights a potential optimization direction for future work.

\section{Conclusions}

We address limitations in existing reputation-based leader election frameworks for partially synchronous BFTs. We present a novel protocol-independent abstraction formalizing three core properties for theoretical analysis and design. Building on this, we design SWLE, a novel reputation-based leader election mechanism that provides enhanced guarantees. We show, with a up to  16-server deployment, SWLE achieves superior performance to the state-of-the-art solution under common Byzantine faults, while maintaining efficiency in fault-free scenarios.

\bibliographystyle{IEEEtran}
\bibliography{myrefs}

\end{document}